\def\mdseries@tt{m}             %1
\newtheorem{observation}{Observation}%[section]
\newcommand{\sm}{\proc{SplitMesher}\xspace}
\newcommand{\Mesh}{\textsc{Mesh}\xspace}
\newcommand{\mesh}{Mesh\xspace}
\newcommand{\ndegree}{\textup{d}} % Command \deg already defined by some dependency
\begin{document}
\sloppy %4

%% Title information
\title[Mesh]{\Mesh: Compacting Memory Management \\ for C/C++ Applications} % Unmanaged Languages}

%% Author information NOOP
%% Contents and number of authors suppressed with 'anonymous'.
%% Each author should be introduced by \author, followed by
%% \authornote (optional), \orcid (optional), \affiliation, and
%% \email.
%% An author may have multiple affiliations and/or emails; repeat the
%% appropriate command.
%% Many elements are not rendered, but should be provided for metadata
%% extraction tools.

%% Author with single affiliation.
\author{Bobby Powers}
%\authornote{with author1 note}          %% \authornote is optional;
                                        %% can be repeated if necessary
%\orcid{nnnn-nnnn-nnnn-nnnn}             %% \orcid is optional
\affiliation{
  %\position{Position1}
  \department{College of Information and Computer Sciences}              %% \department is recommended
  \institution{University of Massachusetts Amherst}            %% \institution is required
  \streetaddress{140 Governors Drive}
  \city{Amherst}
  \state{MA}
  \postcode{01003}
  \country{USA}                    %% \country is recommended
}
\email{bpowers@cs.umass.edu}          %% \email is recommended

%% Author with single affiliation.
\author{David Tench}
%\authornote{with author1 note}          %% \authornote is optional;
                                        %% can be repeated if necessary
%\orcid{nnnn-nnnn-nnnn-nnnn}             %% \orcid is optional
\affiliation{
  %\position{Position1}
  \department{College of Information and Computer Sciences}              %% \department is recommended
  \institution{University of Massachusetts Amherst}            %% \institution is required
  \streetaddress{140 Governors Drive}
  \city{Amherst}
  \state{MA}
  \postcode{01003}
  \country{USA}                    %% \country is recommended
}
\email{dtench@cs.umass.edu}          %% \email is recommended

%% Author with single affiliation.
\author{Emery D. Berger}
%\authornote{with author1 note}          %% \authornote is optional;
                                        %% can be repeated if necessary
%\orcid{nnnn-nnnn-nnnn-nnnn}             %% \orcid is optional
\affiliation{
  %\position{Position1}
  \department{College of Information and Computer Sciences}              %% \department is recommended
  \institution{University of Massachusetts Amherst}            %% \institution is required
  \streetaddress{140 Governors Drive}
  \city{Amherst}
  \state{MA}
  \postcode{01003}
  \country{USA}                    %% \country is recommended
}
\email{emery@cs.umass.edu}          %% \email is recommended

%% Author with single affiliation.
\author{Andrew McGregor}
%\authornote{with author1 note}          %% \authornote is optional;
                                        %% can be repeated if necessary
%\orcid{nnnn-nnnn-nnnn-nnnn}             %% \orcid is optional
\affiliation{
  %\position{Position1}
  \department{College of Information and Computer Sciences}              %% \department is recommended
  \institution{University of Massachusetts Amherst}            %% \institution is required
  \streetaddress{140 Governors Drive}
  \city{Amherst}
  \state{MA}
  \postcode{01003}
  \country{USA}                    %% \country is recommended
}
\email{mcgregor@cs.umass.edu}          %% \email is recommended

\begin{abstract}

Programs written in C/C++ can suffer from serious memory
fragmentation, leading to low utilization of memory, degraded
performance, and application failure due to memory exhaustion. This
paper introduces \Mesh, a plug-in replacement for \texttt{malloc}
that, for the first time, eliminates fragmentation in unmodified
C/C++ applications. \Mesh combines novel randomized algorithms with
widely-supported virtual memory operations to provably reduce
fragmentation, breaking the classical Robson bounds with high
probability. \Mesh generally matches the runtime performance of
state-of-the-art memory allocators while reducing memory consumption;
in particular, it reduces the memory of consumption of Firefox by 16\%
and Redis by 39\%.

\end{abstract}

%% 2012 ACM Computing Classification System (CSS) concepts
%% Generate at 'http://dl.acm.org/ccs/ccs.cfm'.
\begin{CCSXML}
<ccs2012>
<concept>
<concept_id>10011007.10010940.10010941.10010949.10010950.10010953</concept_id>
<concept_desc>Software and its engineering~Allocation / deallocation strategies</concept_desc>
<concept_significance>500</concept_significance>
</concept>
</ccs2012>
\end{CCSXML}

\ccsdesc[500]{Software and its engineering~Allocation / deallocation strategies}
%% End of generated code

\keywords{Memory management, runtime systems, unmanaged languages}

%% \maketitle
%% Note: \maketitle command must come after title commands, author
%% commands, abstract environment, Computing Classification System
%% environment and commands, and keywords command.
\maketitle

% TODO: make sure to talk about virtual compaction from Hound in related work!
% TODO: Make repo public, put in redacted citation.
% TODO: Rust maybe later
% TODO: mention evaluation results

\begin{figure*}[t!]
  \centering
  \subfloat[\textbf{Before:} these pages are candidates for
      ``meshing'' \newline because their allocated objects do not
      overlap.\vspace{2em}]{
      \includegraphics[width=0.45\textwidth]{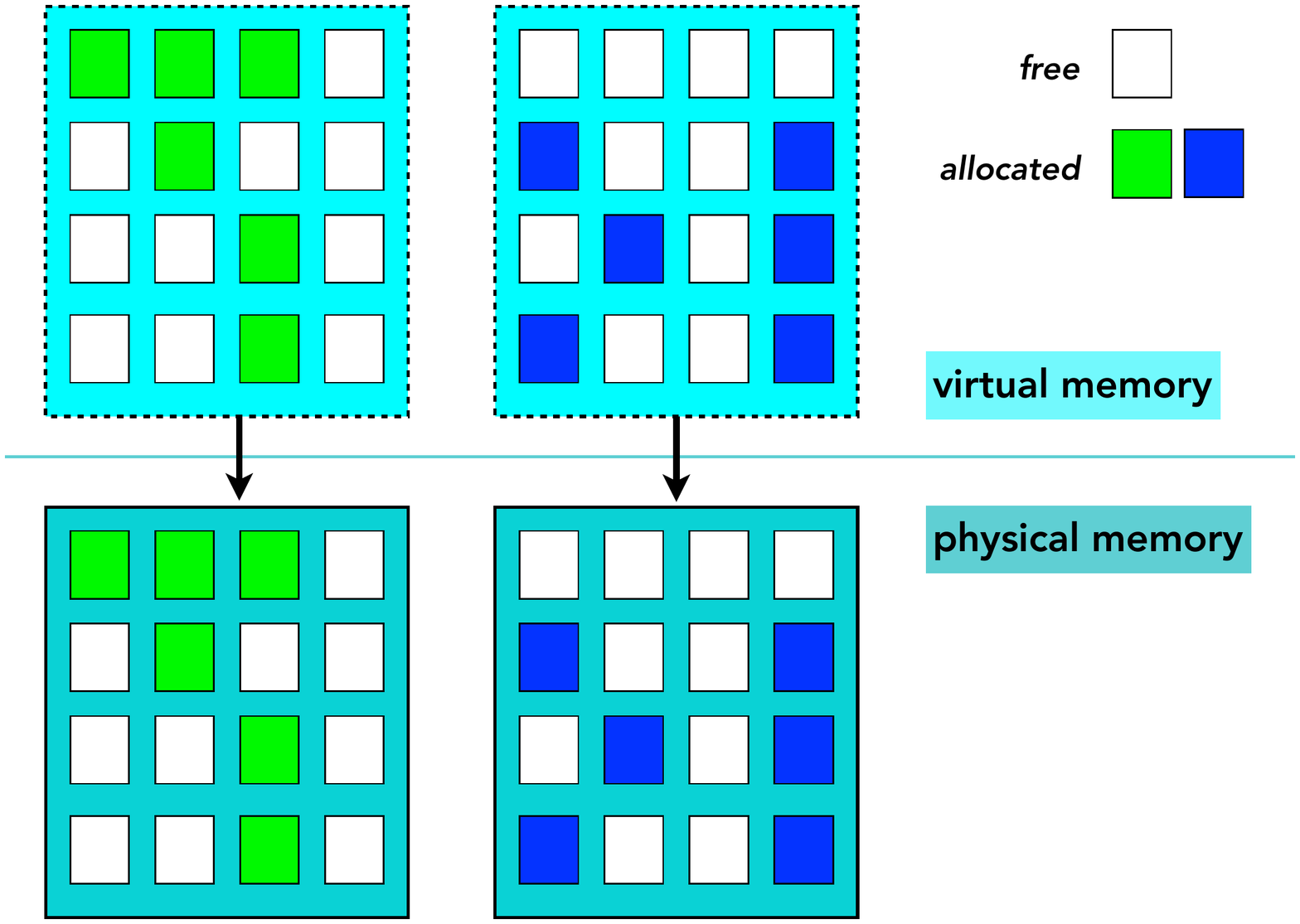}
      \label{pre-meshing}
  }
  ~~~~~
  \centering
  \subfloat[\textbf{After:} both virtual pages now point to the
      first physical page; the second page is now freed.]{
      \includegraphics[width=0.45\textwidth]{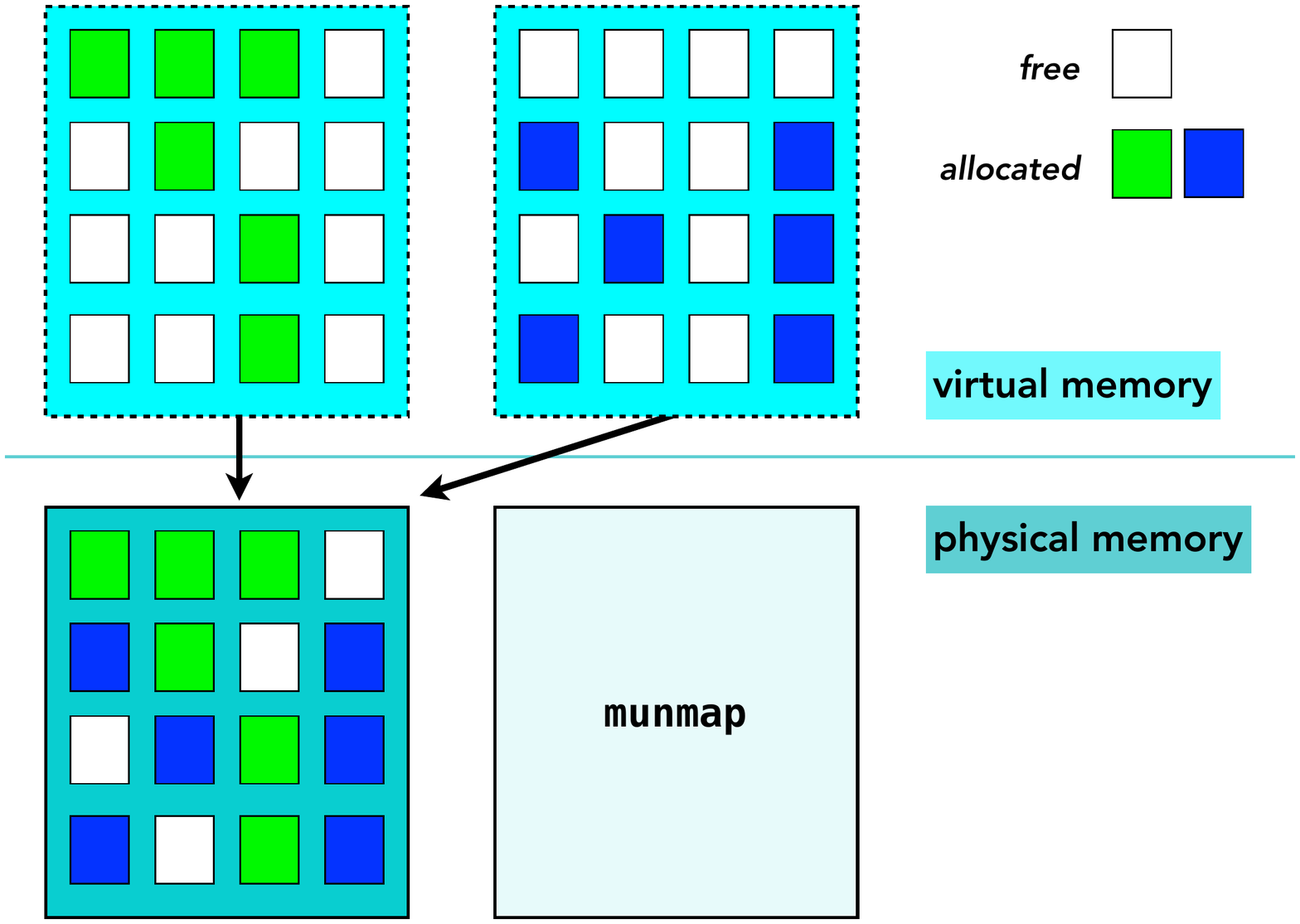}
      \label{post-meshing}
  }

  \caption{\textbf{\Mesh{} in action.} \Mesh{} employs novel
    randomized algorithms that let it efficiently find and then
    ``mesh'' candidate pages within \emph{spans} (contiguous 4K pages)
    whose contents do not overlap.  In this example, it increases
    memory utilization across these pages from 37.5\% to 75\%, and
    returns one physical page to the OS (via \texttt{munmap}),
    reducing the overall memory footprint. \Mesh{}'s randomized
    allocation algorithm ensures meshing's effectiveness with high
    probability.}
  
  \label{fig:meshing}
\end{figure*}

\section{Introduction}
\label{sec:introduction}

Memory consumption is a serious concern across the spectrum of modern
computing platforms, from mobile to desktop to datacenters. For
example, on low-end Android devices, Google reports that more than 99
percent of Chrome crashes are due to running out of memory when
attempting to display a web page~\cite{hara:stateofblink}. On
desktops, the Firefox web browser has been the subject of a five-year
effort to reduce its memory footprint~\cite{awsy}. In datacenters,
developers implement a range of techniques from custom allocators to
other \emph{ad hoc} approaches in an effort to increase memory
utilization~\cite{jemalloc:exposehints,redis:announcement}.
%% while
%% Google reports that 10\% of CPU cycles in their clusters are spent on
%% memory allocation~\cite{kanev:2015:warehouse-scale}.

A key challenge is that, unlike in garbage-collected environments,
automatically reducing a C/C++ application's memory footprint
via compaction is not possible. Because the addresses of allocated
objects are directly exposed to programmers, C/C++ applications can
freely modify or hide addresses.  For example, a program may stash
addresses in integers, store flags in the low bits of aligned
addresses, perform arithmetic on addresses and later reference them,
or even store addresses to disk and later reload them.  This hostile
environment makes it impossible to safely relocate objects: if an
object is relocated, all pointers to its original location must be
updated. However, there is no way to safely update \emph{every}
reference when they are ambiguous, much less when they are absent.

Existing memory allocators for C/C++ employ a variety of
best-effort heuristics aimed at reducing average
fragmentation~\cite{johnstone:1998:fragmentation}. However, these
approaches are inherently limited. In a classic result, Robson showed
that all such allocators can suffer from catastrophic
memory fragmentation~\cite{robson:1977:worstcasefrag}. This increase
in memory consumption can be as high as the $\log$ of the ratio
between the largest and smallest object sizes allocated. For example,
for an application that allocates 16-byte and 128KB objects, it is
possible for it to consume $13\times$ more memory than required.

Despite nearly fifty years of conventional wisdom indicating that
compaction is impossible in unmanaged languages, this paper shows that
it is not only possible but also practical. It introduces
\Mesh, a memory allocator that effectively and efficiently performs
compacting memory management to reduce memory usage in unmodified
C/C++ applications.

Crucially and counterintuitively, \Mesh performs compaction without
relocation; that is, without changing the addresses of objects. This
property is vital for compatibility with arbitrary C/C++
applications. To achieve this, \Mesh{} builds on a mechanism which we
call \emph{meshing}, first introduced by Novark et al.'s Hound memory
leak detector~\cite{1542521}. Hound employed meshing in an effort to avoid
catastrophic memory consumption induced by its memory-inefficient
allocation scheme, which can only reclaim memory when every object on
a page is freed. Hound first searches for pages whose live objects do
not overlap. It then copies the contents of one page onto the other,
remaps one of the \emph{virtual} pages to point to the single
\emph{physical} page now holding the contents of both pages, and
finally relinquishes the other physical page to the
OS. Figure~\ref{fig:meshing} illustrates meshing in action.

\Mesh{} overcomes two key technical challenges of meshing that previously made
it both inefficient and potentially entirely ineffective. First,
Hound's search for pages to mesh involves a linear scan of pages on
calls to \texttt{free}. While this search is more efficient than a
naive $O(n^2)$ search of all possible pairs of pages, it remains
prohibitively expensive for use in the context of a general-purpose
allocator. Second, Hound offers no guarantees that \emph{any} pages
would ever be meshable.  Consider an application that happens to
allocate even one object in the same offset in every page. That layout
would preclude meshing altogether, eliminating the possibility of
saving any space.

\Mesh makes meshing both efficient and provably effective (with high
probability) by combining it with two novel randomized
algorithms. First, \Mesh uses a space-efficient randomized
allocation strategy that effectively scatters objects within each
virtual page, making the above scenario provably exceedingly
unlikely. Second, \Mesh incorporates an efficient randomized
algorithm that is guaranteed with high probability to quickly find
candidate pages that are likely to mesh. These two algorithms work in
concert to enable formal guarantees on \Mesh's effectiveness. Our
analysis shows that \Mesh breaks the above-mentioned Robson worst
case bounds for fragmentation with high
probability~\cite{robson:1977:worstcasefrag}.

We implement \Mesh as a library for C/C++ applications running on
Linux of Mac OS X. \Mesh{} interposes on memory management operations,
making it possible to use it without code changes or even
recompilation by setting the appropriate environment variable to load
the \Mesh{} library (e.g., \texttt{export
  LD\_PRELOAD=libmesh.so}). Our empirical evaluation demonstrates that
our implementation of \Mesh{} is both fast and efficient in
practice. It generally matches the performance of state-of-the-art
allocators while guaranteeing the absence of catastrophic
fragmentation with high probability. In addition, it occasionally
yields substantial space savings: replacing the standard allocator
with \Mesh{} automatically reduces memory consumption by 16\%
(Firefox) to 39\% (Redis).
%In
%general, the longer-lived and more memory-intensive the application,
%the more memory \Mesh can save.

\subsection{Contributions}
\label{sec:contributions}

This paper makes the following contributions:

\begin{itemize}

\item It introduces \textbf{\Mesh}, a novel memory allocator that acts
  as a plug-in replacement for \texttt{malloc}. \Mesh{} combines
  remapping of virtual to physical pages (meshing) with randomized
  allocation and search algorithms to enable safe and effective
  \emph{compaction without relocation} for C/C++
  (\S\ref{sec:meshing}, \S\ref{sec:algorithms},
  \S\ref{sec:allocator}).
  
\item It presents theoretical results that guarantee \Mesh{}'s
    efficiency and effectiveness with high probability (\S\ref{sec:theory}).

\item It evaluates \Mesh{}'s performance empirically, demonstrating \Mesh{}'s ability to reduce
    space consumption while generally imposing low runtime
    overhead (\S\ref{sec:evaluation}).

\end{itemize}

\section{Overview}
\label{sec:meshing}

\begin{comment}
  \begin{figure}[!t]
\centering
\includegraphics[width=.3\textwidth]{}
\caption{The bitmaps managing the allocated space in a span
  (visualized as allocated objects in the span, top) can be
  represented as bitstrings of 0s and 1s (bottom), where a 1
  corresponds to an allocated object and 0 to free space.}
\label{fig:bitmap-bitstring}
\end{figure}
\end{comment}

This section provides a high-level overview of how \Mesh{} works and
gives some intuition as to how its algorithms and implementation
ensure its efficiency and effectiveness, before diving into detailed
description of \Mesh{}'s algorithms (\S\ref{sec:algorithms}),
implementation (\S\ref{sec:allocator}), and its theoretical analysis
(\S\ref{sec:theory}).

\subsection{Remapping Virtual Pages}

\Mesh{} enables compaction without relocating object addresses; it
depends only on hardware-level virtual memory support, which is
standard on most computing platforms like x86 and ARM64. \Mesh{} works
by finding pairs of pages and merging them together \emph{physically}
but not \emph{virtually}: this merging lets it relinquish
physical pages to the OS.

Meshing is only possible when no objects on the pages occupy the same
offsets.  A key observation is that as fragmentation increases (that
is, as there are more free objects), the likelihood of successfully
finding pairs of pages that mesh also increases.

Figure~\ref{fig:meshing} schematically illustrates the meshing
process. \Mesh{} manages memory at the granularity of \textit{spans},
which are runs of contiguous 4K pages (for purposes of illustration,
the figure shows single-page spans). Each span only contains
same-sized objects. The figure shows two spans of memory with low
utilization (each is under $40\%$ occupied) and whose allocations are at non-overlapping offsets.

\begin{comment}

More formally, we can say that $t$ spans
\textit{\mesh} if and only if the sum of each offset across the $t$
spans sum to 1 or less:

\begin{align}
  \forall k \in [0, b-1]. \sum_{0 \leq i \leq t} s_i[k] \leq 1
\end{align}

where $b$ is span length, and $s_i[k] = 1$ iff there is an object at the $k$th offset of span i.
\end{comment}

Meshing consolidates allocations from each span onto one physical span.
Each object in the resulting meshed span resides at the same offset as
it did in its original span; that is, its virtual addresses are
preserved, making meshing invisible to the application. Meshing then
updates the virtual-to-physical mapping (the page tables) for the
process so that both virtual spans point to the same physical
span. The second physical span is returned to the OS.  When average
occupancy is low, meshing can consolidate many pages, offering the
potential for considerable space savings.
% consolidated to a single span and all other spans can be reused.

\subsection{Random Allocation}

A key threat to meshing is that pages could contain objects at the
same offset, preventing them from being meshed. In the worst case, all
spans would have only one allocated object, each at the same offset,
making them non-meshable. \Mesh{} employs randomized allocation to
make this worst-case behavior exceedingly unlikely. It allocates
objects uniformly at random across all available offsets in a span. As
a result, the probability that all objects will occupy the same offset
is $\left({1}/{b}\right)^{n-1}$, where $b$ is the number of objects in
a span, and $n$ is the number of spans.

%Meshing employs random allocation to minimize the likelihood of this happening too often. The use of a randomized algorithm lets us reason about the effectiveness of meshing in expectation.

%Meshing introduces a limited amount of randomness to ensure that we
%can reason about meshing spans in expectation.

In practice, the resulting probability of being unable to mesh many
pages is vanishingly small. For example, when meshing 64 spans with
one 16-byte object allocated on each (so that the number of objects
$b$ in a 4K span is $256$), the likelihood of being unable to mesh any
of these spans is $10^{-152}$. To put this into perspective, there are
estimated to be roughly $10^{82}$ particles in the universe.

We use randomness to guide the design of \Mesh{}'s algorithms
(\S\ref{sec:algorithms}) and implementation (\S\ref{sec:allocator});
this randomization lets us prove robust guarantees of its performance
(\S\ref{sec:theory}), showing that \Mesh{} breaks the Robson bounds with
high probability.

% ,
%subject to an assumption that application allocation behavior does not
%depend on the addresses returned by the allocator.

%% TODO: should insert a treatment of Emery's ideas about address obliviousness somewhere.

\subsection{Finding Spans to Mesh}

Given a set of spans, our goal is to mesh them in a way that frees as
many physical pages as possible. We can think of this task as that of
partitioning the spans into subsets such that the spans in each subset
mesh. An optimal partition would minimize the number of such subsets.

%% TODO: fix ref to point at exact subsection in algorithms.

Unfortunately, as we show, optimal meshing is not feasible
(\S\ref{sec:theory}). Instead, the algorithms in
Section~\ref{sec:algorithms} present practical methods for finding
high-quality meshes under real-world time constraints. We show that
solving a simplified version of the problem (\S\ref{sec:algorithms})
is sufficient to achieve reasonable meshes with high probability
(\S\ref{sec:theory}).

\section{Algorithms}
\label{sec:algorithms}

%This section provides an overview of \Mesh{}'s algorithms.
%Section~\ref{sec:allocator} provides a detailed description of \Mesh's
%implementation, while Section~\ref{sec:theory} presents a theoretical
%analysis of its effectiveness.

\Mesh{} comprises three main algorithmic components: allocation
(\S\ref{sec:allocation-algorithm}), deallocation
(\S\ref{sec:deallocation-algorithm}), and finding spans to mesh
(\S\ref{sec:meshing-algorithm}). Unless otherwise noted and without
loss of generality, all algorithms described here are per size class
(within spans, all objects are same size).

\subsection{Allocation}
\label{sec:allocation-algorithm}
% We want out allocator to enforce random spread of objects so that we can probabilistically guarantee meshing.  Also we want it to reuse memory when available.

Allocation in \Mesh{} consists of two steps: (1) finding a span to
allocate from, and (2) randomly allocating an object from that span.
\Mesh{} always allocates from a thread-local shuffle vector -- a
randomized version of a freelist (described in detail in
\S\ref{sec:shuffle-freelists}). The shuffle vector contains offsets
corresponding to the slots of a single span.  We call that span the
\emph{attached} span for a given thread.

If the shuffle vector is empty, \Mesh relinquishes the current
thread's attached span (if one exists) to the \emph{global heap}
(which holds all unattached spans), and asks it to select a new
span. If there are no partially full spans, the global heap returns a
new, empty span.  Otherwise, it selects a partially full span for
reuse. To maximize utilization, the global heap groups spans into bins
organized by decreasing occupancy (e.g., 75-99\% full in one bin,
50-74\% in the next). The global heap scans for the first non-empty
bin (by decreasing occupancy), and randomly selects a span from that
bin.

Once a span has been selected, the allocator adds the offsets
corresponding to the free slots in that span to the thread-local
shuffle vector (in a random order). \Mesh{} pops the first entry off
the shuffle vector and returns it.

\subsection{Deallocation}
\label{sec:deallocation-algorithm}

Deallocation behaves differently depending on whether the free is
local (the address belongs to the current thread's attached span),
remote (the object belongs to another thread's attached span), or if
it belongs to the global heap.

For local frees, \Mesh{} adds the object's offset onto the span's
shuffle vector in a random position and returns. For remote frees,
\Mesh{} atomically resets the bit in the corresponding index in a
bitmap associated with each span. Finally, for an object belonging to
the global heap, \Mesh{} marks the object as free, updates the span's
occupancy bin; this action may additionally trigger meshing.
%, which we
%describe below.

%% Global frees similarly mark the space backing an allocation as free and open for reuse in the owning span, but additionally may trigger meshing.  Meshing happens after a global free if it has been more than a set period of time since the last meshing (settable both at program startup and later by the program through the \texttt{mallctl} API, by default a tenth of a second, and if the allocator thinks there is a reasonable chance of meshing reclaiming space.

%% TODO: free -- right now, we assume unlimited virtual addresses; max limit of ranges in kernel (discuss in implementation)

\subsection{Meshing}
\label{sec:meshing-algorithm}

When meshing, \Mesh{} randomly chooses pairs of spans and attempts to
mesh each pair. The meshing algorithm, which we call \sm
(Figure~\ref{fig:meshalg}), is designed both for practical
effectiveness and for its theoretical guarantees.  The parameter $t$,
which determines the maximum number of times each span is probed (line
\ref{li:outerloop}), enables space-time trade-offs. The parameter $t$
can be increased to improve mesh quality and therefore reduce space,
or decreased to improve runtime, at the cost of sacrificed meshing
opportunities. We empirically found that $t=64$ balances runtime and
meshing effectiveness, and use this value in our implementation.

\sm proceeds by iterating through $S_l$ and checking whether it can
mesh each span with another span chosen from $S_r$ (line
\ref{li:condition}).  If so, it removes these spans from their
respective lists and meshes them (lines
\ref{li:remove}--\ref{li:mesh}). \sm repeats until it has checked $t *
|S_l|$ pairs of spans; \S\ref{sec:meshing-implementation} describes
the implementation of \sm in detail.

\begin{figure}[!t]
\begin{codebox}
    \Procname{$\sm(S,t)$}
%    \li M $\gets$ []
    \li $n \gets$ length$(S)$
    \li $S_l,$ $S_r \gets S[1:n/2],$ $S[n/2 +1 : n]$
    \li \For $(i = 0, i<t, i++)$ \label{li:outerloop}
        \li \Do
            $\mbox{len} = |S_l|$\;
            \li \For $(j = 0, j < \mbox{len}, j++)$ \label{li:innerloop}
                \li \Do
                    \If \proc{Meshable} $(S_l(j)$, $S_r(j+i$ \% $\mbox{len}$)) \Then \label{li:condition}
                        \li $S_l \leftarrow S_l \setminus S_l(j)$ \label{li:remove}
                        \li $S_r \leftarrow S_r \setminus S_r(j+i$ \% $\mbox{len}$)
%                        \li M $\leftarrow$ M $\cup$ $S_l(j)$ $\cup$ $S_r(j+i$ mod len)
                        \li \proc{mesh}($ S_l(j)$, $S_r(j+i$ \% $\mbox{len}$)) \label{li:mesh}
                    \End
                \End
        \End
%    \li \Return M
\end{codebox}
\caption{\textbf{Meshing random pairs of spans.} \sm splits the randomly ordered span list $S$ into halves, then probes pairs between halves for meshes.  Each span is probed up to $t$ times.}
\label{fig:meshalg}
\end{figure}

\section{Implementation}
\label{sec:allocator}

We implement \Mesh as a drop-in replacement memory allocator that
implements meshing for single or multi-threaded applications written
in C/C++. Its current implementation work for 64-bit Linux and Mac OS
X binaries. \Mesh can be explicitly linked against by passing
\texttt{-lmesh} to the linker at compile time, or loaded dynamically
by setting the \texttt{LD\_PRELOAD} (Linux) or
\texttt{DYLD\_INSERT\_LIBRARIES} (Mac OS X) environment variables to
point to the \Mesh{} library. When loaded, \Mesh interposes on
standard libc functions to replace all memory allocation functions.

\begin{comment}
\begin{figure}
  \includegraphics[width=.5\textwidth]{}
  \caption{\textbf{Heap organization.} \Mesh's internal allocator
    manages memory for \Mesh-internal dynamic data structures.
    The global heap manages both large objects and the meshable arena that spans are allocated from.  Each thread has a
    local heap which satisfies small allocations from MiniHeaps (one
    per size class).}
  \label{fig:global-heap}
\end{figure}
\end{comment}

\Mesh combines traditional allocation strategies with meshing to
minimize heap usage.  Like most modern memory
allocators~\cite{Novark:2010:DSH:1866307.1866371,1134000,379232,evans2006scalable,ghemawattcmalloc},
\Mesh is a segregated-fit allocator. \Mesh{} employs fine-grained size
classes to reduce internal fragmentation due to rounding up to the
nearest size class. \Mesh{} uses the same size classes correspond to those
used by jemalloc for objects 1024 bytes and
smaller~\cite{evans2006scalable}, and power-of-two size classes for
objects between 1024 and 16K.  Allocations are fulfilled from the
smallest size class they fit in (e.g., objects of size 33--48 bytes
are served from the 48-byte size class); objects larger than 16K are
individually fulfilled from the global arena.  Small objects are
allocated out of \textit{spans} (\S\ref{sec:meshing}), which are
multiples of the page size and contain between 8 and 256 objects of a
fixed size.  Having at least eight objects per span helps
amortize the cost of reserving memory from the global manager for
the current thread's allocator.

Objects of 4KB and larger are always page-aligned and span at least one
entire page. \Mesh does not consider these objects for meshing;
instead, the pages are directly freed to the OS.

\Mesh's heap organization consists of four main components.
\emph{MiniHeaps} track occupancy and other metadata for spans
(\S\ref{sec:miniheaps}).  \textit{Shuffle vectors} enable efficient,
random allocation out of a MiniHeap (\S\ref{sec:shuffle-freelists}).
\textit{Thread local heaps} satisfy small-object allocation requests
without the need for locks or atomic operations in the common case
(\S\ref{sec:thread-local-heaps}). Finally, the \textit{global heap}
(\S\ref{sec:global-heap}) manages runtime state shared by all threads,
large object allocation, and coordinates meshing operations
(\S\ref{sec:meshing-implementation}).

\subsection{MiniHeaps}
\label{sec:miniheaps}

MiniHeaps manage allocated physical spans of memory and are either
\emph{attached} or \emph{detached}.  An attached MiniHeap is owned by
a specific thread-local heap, while a detached MiniHeap is only
referenced through the global heap.  New small objects are
\textit{only} allocated out of attached MiniHeaps.

Each MiniHeap contains metadata that comprises span length, object
size, allocation bitmap, and the start addresses of any virtual spans
meshed to a unique physical span.  The number of objects that can be
allocated from a MiniHeap bitmap is \textit{objectCount = spanSize /
  objSize}.  The allocation bitmap is initialized to
\textit{objectCount} zero bits.

When a MiniHeap is attached to a
thread-local \emph{shuffle vector} (\S\ref{sec:shuffle-freelists}),
each offset that is unset in the MiniHeap's bitmap is added to the
shuffle vector, with that bit now atomically set to one in the bitmap.
This approach is designed to allow multiple threads to free objects
which keeping most memory allocation operations local in the common
case.

When an object is freed and the free is non-local
(\S\ref{sec:deallocation-algorithm}), the bit is reset.  When a new
MiniHeap is allocated, there is only one virtual span that points to
the physical memory it manages. After meshing, there may be multiple
virtual spans pointing to the MiniHeap's physical memory.

\begin{figure}[!t]
  \centering
  \subfloat[A shuffle vector for a span of size 8, where no objects have
      yet been allocated.]{
      \includegraphics[width=.4\textwidth]{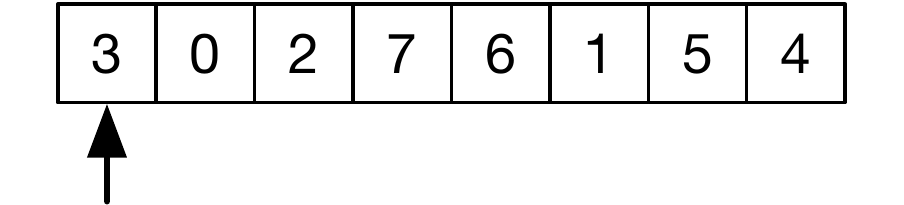}
  }
  \vspace{-0em}
  \subfloat[The shuffle vector after the first object has been allocated.]{
      \includegraphics[width=.4\textwidth]{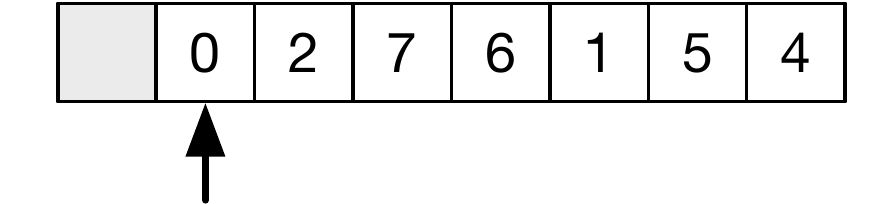}
  }
  \vspace{-0em}
  \subfloat[On \texttt{free}, the object's offset is pushed onto the
      front of the vector, the allocation index is updated, and the
      offset is swapped with a randomly chosen offset.]{
      \includegraphics[width=.4\textwidth]{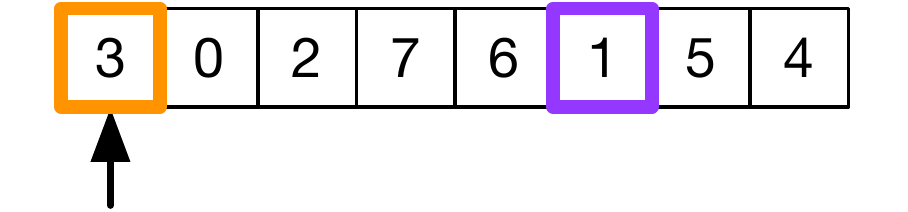}
  }
  \vspace{-0em}
  \subfloat[Finally, after the swap, new allocations proceed in a
      bump-pointer like fashion.]{
    \centering
    \includegraphics[width=.4\textwidth]{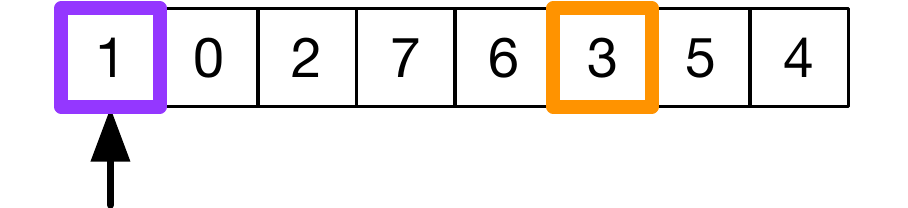}
  }
  \vspace{0.5em}
  \caption{\textbf{Shuffle vectors} compactly enable fast random allocation.
    Indices (one byte each) are maintained in random order; allocation is
    popping, and deallocation is pushing plus a random swap (\S\ref{sec:shuffle-freelists}).}
    % When an object is freed and the MiniHeap     it was allocated from is still attached to the local thread,    its
  \label{fig:shuffle-freelists}
\end{figure}

%% TODO: restore locality section if and when we have locality results
\begin{comment}
\subsubsection{Locality}

At first blush, it may appear that \Mesh's use of randomization would
degrade locality and thus degrade performance. Like DieHard (but unlike
many other allocators), \Mesh does not seek to allocate consecutive
objects nearby. However, we expect this approach to have minimal
impact on locality in many cases. First, the increasing size of cache
lines means that inter-object locality is not a concern for most
objects. Modern 64-bit Intel processors have 64-byte cache lines, and
ARM devices have 32 or 64-byte cache lines, meaning that object
locality has little impact on L1 cache locality for objects of that
size or larger. We also expect TLB pressure to be unaffected compared
to other segregated-fit allocators with the same choice of size
classes; \Mesh allocates randomly \emph{within} spans. % \textbf{XXX forward ref to evaluation; we want to say primary cost is allocation mechanism, not locality effects.}
\end{comment}

\subsection{Shuffle Vectors}
\label{sec:shuffle-freelists}

Shuffle vectors are a novel data structure that lets \Mesh
perform randomized allocation out of a MiniHeap efficiently and
with low space overhead.

Previous memory allocators that have employed randomization (for
security or reliability) perform randomized allocation by random
probing into
bitmaps~\cite{1134000,Novark:2010:DSH:1866307.1866371}. In these
allocators, a memory allocation request chooses a random number in the
range $[0,\text{\textit{objectCount}}-1]$. If the associated bit is
zero in the bitmap, the allocator sets it to one and returns the
address of the corresponding offset. If the offset is already one,
meaning that the object is in use, a new random number is chosen and
the process repeated. Random probing allocates objects in $O(1)$
\emph{expected} time but requires overprovisioning memory by a
constant factor (e.g., $2\times$ more memory must be allocated than
needed). This overprovisioning is at odds with our goal of
\emph{reducing} space overhead.

Shuffle vectors solve this problem, combining low space overhead with
worst-case $O(1)$ running time for \texttt{malloc} and
\texttt{free}. Each comprises a fixed-size array consisting of all the
offsets from a span that are \textit{not} already allocated, and an
allocation index representing the head. Each vector is initially
randomized with the Knuth-Fischer-Yates
shuffle~\cite{knuth:1981:semi}, and its allocation index is set to
0. Allocation proceeds by selecting the next available number in the
vector, ``bumping'' the allocation index and returning the
corresponding address. Deallocation works by placing the freed object
at the front of the vector and performing one iteration of the shuffle
algorithm; this operation preserves randomization of the
vector. Figure~\ref{fig:shuffle-freelists} illustrates this process,
while Figure~\ref{fig:malloc} has pseudocode listings for
initialization, allocation, and deallocation.

%\vspace{-1.2em}
%\begin{align*}
 % \text{\textit{ptr}} = \text{\textit{spanStart + off*objSize}}
%\end{align*}

%This freelist is initialized with all of the available offsets $[0
%  .. $ objectCount $]$ and then sorted

%When an object is freed and the owning MiniHeap is still in use, a
%single iteration of the shuffle is performed to re-add the newly freed
%offset back to the list of available offsets.

Shuffle vectors impose far less space overhead than random
probing. First, with a maximum of 256 objects in a span, each offset
in the vector can be represented as an unsigned character (a single
byte). Second, because \Mesh needs only one shuffle vector per
attached MiniHeap, the amount of memory required for vectors is
$256c$, where $c$ is the number of size classes (24 in the current
implementation): roughly 2.8K per thread.  Finally, shuffle vectors
are only ever accessed from a single thread, and so do not require
locks or atomic operations.  While bitmaps must be operated on
atomically (frees may originate at any time from other threads),
shuffle vectors are only accessed from a single thread and do not
require synchronization or cache-line flushes.

\subsection{Thread Local Heaps}
\label{sec:thread-local-heaps}

All malloc and free requests from an application start at the thread's
local heap. Thread local heaps have shuffle vectors for each size
class, a reference to the global heap, and their own thread-local
random number generator.

Allocation requests are handled differently depending on the size of
the allocation.  If an allocation request is larger than 16K, it is
forwarded to the global heap for fulfillment
(\S\ref{sec:global-heap}).  Allocation requests 16K and smaller are
small object allocations and are handled directly by the shuffle
vector for the size class corresponding to the allocation request, as
in Figure~\ref{fig:malloc}a.  If the shuffle vector is empty, it is
refilled by requesting an appropriately sized MiniHeap from the global
heap.  This MiniHeap is a partially-full MiniHeap if one exists, or
represents a freshly-allocated span if no partially full ones are
available for reuse.  Frees, as in Figure~\ref{fig:malloc}d, first
check if the object is from an attached MiniHeap.  If so, it is
handled by the appropriate shuffle vector, otherwise it is passed to
the global heap to handle.

\subsection{Global Heap}
\label{sec:global-heap}

The global heap allocates MiniHeaps for thread-local heaps, handles
all large object allocations, performs non-local frees for both small
and large objects, and coordinates meshing.

\subsubsection{The Meshable Arena}
\label{sec:meshable-arena}

The global heap allocates meshable spans and large objects from a
single, global meshable arena. This arena contains two sets of bins
for same-length spans --- one set is for demand zero-ed spans (freshly
\texttt{mmap}ped), and the other for used spans --- and a mapping of
page offsets from the start of the arena to their owning MiniHeap
pointers.  Used pages are not immediately returned to the OS as they
are likely to be needed again soon, and reclamation is relatively
expensive. Only after 64MB of used pages have accumulated, or whenever
meshing is invoked, \Mesh{} returns pages to OS by calling
\texttt{fallocate} on the heap's file descriptor
(\S\ref{sec:page-table-updates}) with the
\texttt{FALLOC\_FL\_PUNCH\_HOLE} flag.

\begin{comment}
An allocation request for $k$ pages is fulfilled by searching through
the bins of free spans from index $k-1$ to $k_{\text{max}}$ for a free
span of at-least size $k$.  Bins smaller than $k_{\text{max}}$ only
contain spans of a single size, while $k_{\text{max}}$ contains spans
of length 256 or larger.  If no spans are available, the arena is
extended, the extension placed in $k_{\text{max}}$, and bins are
re-searched.  Once a resulting span has been found, if the span is
larger than the $k$ pages requested it is broken in two with the
remainder span placed back in an appropriate bin.  This scheme favors
dirty pages over clean pages to minimize the process's RSS.
\end{comment}

% TODO: mention how this is also how aligned allocations work

\begin{figure}[!t]
  \input{all-code}
  \caption{Pseudocode for \Mesh's core allocation and deallocation routines.}
  \label{fig:malloc}
\end{figure}

\subsubsection{MiniHeap allocation}

Allocating a MiniHeap of size $k$ pages begins with requesting $k$
pages from the meshable arena.  The global allocator then allocates
and initializes a new MiniHeap instance from an internal allocator
that \Mesh uses for its own needs. This MiniHeap is kept live so long
as the number of allocated objects remains non-zero, and singleton
MiniHeaps are used to account for large object allocations.  Finally,
the global allocator updates the mapping of offsets to MiniHeaps for
each of the $k$ pages to point at the address of the new MiniHeap.

\subsubsection{Large objects}

All large allocation requests (greater than 16K) are directly
handled by the global heap. Large allocation requests are rounded up
to the nearest multiple of the hardware page size (4K on x86\_64),
and a MiniHeap for 1 object of that size is requested, as detailed
above.  The start of the span tracked by that MiniHeap is returned to
the program as the result of the malloc call.

\subsubsection{Non-local frees}

If \texttt{free} is called on a pointer that is not contained in an
attached MiniHeap for that thread, the free is handled by the global
heap.  Non-local frees occur when the thread that frees the object is
different from the thread that allocated it, or if there have been
sufficient allocations on the current thread that the original
MiniHeap was exhaused and a new MiniHeap for that size class was
attached.

Looking up the owning MiniHeap for a pointer is a constant time
operation. The pointer is checked to ensure it falls within the arena,
the arena start address is subtracted from it, and the result is
divided by the page size.  The resulting offset is then used to index
into a table of MiniHeap pointers. If the result is zero, the
pointer is invalid (memory management errors like double-frees are
thus easily discovered and discarded); otherwise, it points to a live
MiniHeap.

Once the owning MiniHeap has been found, that MiniHeap's bitmap is
updated atomically in a compare-and-set loop.  If a free occurs for an
object where the owning MiniHeap is attached to a different thread,
the free atomically updates that MiniHeap's bitmap, but does not
update the other thread's corresponding shuffle vector.

\subsection{Meshing}
\label{sec:meshing-implementation}

\Mesh's implementation of meshing is guided by theoretical results
(described in detail in Section~\ref{sec:theory}) that enable it to
efficiently find a number of spans that can be meshed.

Meshing is rate limited by a configurable parameter, settable at
program startup and during runtime by the application through the
semi-standard \texttt{mallctl} API.  The default rate meshes at most
once every tenth of a second.  If the last meshing freed less than one
MB of heap space, the timer is not restarted until a subsequent
allocation is freed through the global heap.  This approach ensures that \Mesh
does not waste time searching for meshes when the application and heap
are in a steady state.

We implement the \sm algorithm from Section~\ref{sec:algorithms} in
C++ to find meshes.  Meshing proceeds one size class at a time.  Pairs
of mesh candidates found by \sm are recorded in a list, and after \sm
returns candidate pairs are meshed together \emph{en masse}.

Meshing spans together is a two step process. First, \Mesh{}
consolidates objects onto a single physical span. This consolidation
is straightforward: \Mesh{} copies objects from one span into the free
space of the other span, and updates MiniHeap metadata (like the
allocation bitmap).  Importantly, as \Mesh copies data at the physical
span layer, even though objects are moving in memory, no pointers or
data internal to moved objects or external references need to be
updated. Finally, \Mesh{} updates the process's virtual-to-physical mappings
to point all meshed virtual spans at the consolidated physical
span.

\subsubsection{Page Table Updates}
\label{sec:page-table-updates}

\Mesh updates the process's page tables via calls to \texttt{mmap}.
We exploit the fact that \texttt{mmap} lets the same offset in a file
(corresponding to a physical span) be mapped to multiple
addresses. \Mesh's arena, rather than being an anonymous mapping, as
in traditional \texttt{malloc} implementations, is instead a mapping
backed by a temporary file. This temporary file is obtained via the
\texttt{memfd\_create} system call and only exists in memory or on
swap.

\subsubsection{Concurrent Meshing}

Meshing takes place concurrently with the normal execution of other
program threads with \textit{no} stop-the-world phase required.  This
is similar to how concurrent relocation is implemented in low-latency
garbage collector algorithms like Pauseless and
C4~\cite{click:2005:pauseless, tene:2011:c4}, as described below.
\Mesh maintains two invariants throughout the meshing process: reads
of objects being relocated are always correct and available to
concurrently executing threads, and objects are never written to while
being relocated between physical spans.  The first invariant is maintained
through the atomic semantics of \texttt{mmap}, the second through a
write barrier.

\Mesh's write barrier is implemented with page protections and a
segfault trap handler.  Before relocating objects, \Mesh calls
\texttt{mprotect} to mark the virtual page where objects are being
copied from as read-only.  Concurrent reads succeed as normal.  If a
concurrent thread tries to write to an object being relocated, a
\Mesh-controlled segfault signal handler is invoked by a combination
of the hardware and operating system.  This handler waits on a lock
for the current meshing operation to complete, the last step of which
is remapping the source virtual span as read/write.  Once meshing is
done the handler checks if the address that triggered the segfault was
involved in a meshing operation; if so, the handler exits and the
instruction causing the write is re-executed by the CPU as normal
against the fully relocated object.

\subsubsection{Concurrent Allocation}

All thread-local allocation (on threads other than the one running
\sm) can proceed concurrently and independently with meshing, until
and unless a thread needs a fresh span to allocate from.  Allocation
only is performed from spans owned by a thread, and only spans owned
by the global manager are considered for meshing; spans have a single
owner.  The thread running \sm holds the global heap's lock while
meshing.  This lock also synchronizes transferring ownership of a span
from the global heap to a thread-local heap (or vice-versa).  If
another thread requires a new span to fulfill an allocation request,
the thread waits until the global manager finishes meshing and
releases the lock.

\section{Analysis}
\label{sec:theory}

\newenvironment{claim}[1]{\par\noindent\underline{Claim:}\space#1}{}
\newenvironment{claimproof}[1]{\par\noindent\underline{Proof:}\space#1}{\hfill $\blacksquare$}

\newtheorem{problem}{Problem}

\newcommand{\bigo}{\mathcal{O}}
\newcommand{\page}{\pi}
\newcommand{\str}{s}
\newcommand{\node}{\mathit {v}}
\newcommand{\W}{\mathcal {W}}
\newcommand{\lp}{\left(}
\newcommand{\rparen}{\right)}

%%caveat required for lemma 4.4 added below.  run by andrew before finalizing
This section shows that the \sm procedure described in
\S\ref{sec:meshing-algorithm} comes with strong formal guarantees on
the \textit{quality} of the meshing found along with bounds on its
\textit{runtime}.  In situations where significant meshing
opportunities exist (that is, when compaction is most desirable), \sm
finds with high probability an approximation arbitrarily close to
$1/2$ of the best possible meshing in $O\lp n/q\rparen$ time, where
$n$ is the number of spans and $q$ is the global probability of two
spans meshing.

To formally establish these bounds on quality and runtime, we show
that meshing can be interpreted as a graph problem, analyze its
complexity (\S\ref{subsec:graph}), show that we can do nearly as well
by solving an easier graph problem instead (\S\ref{subsec:matching}),
and prove that \sm approximates this problem with high probability
(\S\ref{subsec:analysis}).

% \item{Introduce an easily-computable \emph{a priori} lower bound for the effect of meshing (\S\ref{subsec:lowerbound})}
%\end{itemize}

\begin{figure}[!t]
  \centering
  \includegraphics[width=.5\textwidth]{}
%  \includegraphics[width=.33\textwidth]{}
%  \vspace{-2em}
%\centering
\caption{\textbf{An example meshing graph.}  Nodes correspond to the spans represented by the strings \texttt{01101000}, \texttt{01010000}, \texttt{00100110}, and \texttt{00010000}.  Edges connect meshable strings (corresponding to non-overlapping spans).}\label{fig:exmesh}
\end{figure}

\subsection{Formal Problem Definitions}
\label{subsec:probdef}
Since \Mesh{} segregates objects based on size, we can limit our
analysis to compaction within a single size class without loss of
generality. For our analysis, we represent spans as binary strings of
length $b$, the maximum number of objects that the span can
store. Each bit represents the allocation state of a single object. We
represent each span $\page$ with string $\str$ such that $\str\lp
i\rparen = 1$ if $\page$ has an object at offset $i$, and 0 otherwise.

\begin{definition}
We say  two  strings $\str_{1}, \str_{2}$ \em{mesh} iff $\sum_i \str_{1}\lp i\rparen \cdot \str_{2} \lp i \rparen = 0$. More generally, a set of binary strings are said to mesh if every pair of strings in this set mesh.
%We say that k binary strings $\str_{1}, ... \str_{k}$ mesh iff $\sum_{j \in [1,k]} \str_{j}(i) \leq 1$ $\forall i \in [0,b-1]$.\\
%Equivalently, k binary strings mesh if each each of said strings are pairwise meshable with all of the other strings.
\end{definition}

When we mesh $k$ spans together, the objects scattered across those
$k$ spans are moved to a single span while retaining their offset from
the start of the span. The remaining $k-1$ spans are no longer needed
and are released to the operating system. We say that we ``release''
$k-1$ \emph{strings} when we mesh $k$ strings together.  Since our
goal is to empty as many physical spans as possible, we can
characterize our theoretical problem as follows:

\begin{problem}
Given a multi-set of $n$ binary strings of length $b$, find a meshing
that releases the maximum number of strings.
\end{problem}

Note that the total number of strings released is equal to $n - \rho -
\phi$, where $\rho$ is the number of total meshes performed, and $\phi$
is the number of strings that remain unmeshed.

\paragraph{A Formulation via Graphs:}
\label{subsec:graph}

We observe that an instance of the meshing problem, a string multi-set
$S$, can naturally be expressed via a graph $G(S)$ where there is a
node for every string in $S$ and an edge between two nodes iff the
relevant strings can be meshed. Figure~\ref{fig:exmesh} illustrates
this representation via an example.

%Given , we construct a meshing graph $G_S$ as follows:\\
%We add a node $\node$ to $G_S$ for each $\str \in S$.  We say node $\node$ represents $\str$.  For each %node pair $\node_1, \node_2$, we add edge $(\node_1, \node_2)$ iff $\node_1$ and $\node_2$ mesh.

%This meshing problem is reducible to min clique cover (MCC), an NP-Complete problem.  We implicitly show %this reduction through a straightforward graph interpretation of the meshing problem.

If a set of strings are meshable, then there is an edge between every
pair of the corresponding nodes: the set of corresponding nodes is a
\emph{clique}. We can therefore decompose the graph into $k$ disjoint
cliques iff we can free $n-k$ strings in the meshing
problem. Unfortunately, the problem of decomposing a graph into the
minimum number of disjoint cliques (\textsc{MinCliqueCover}) is in
general NP-hard. Worse, it cannot even be approximated up to a factor
$m^{1-\epsilon}$ unless $P=NP$~\cite{zuckerman07}.

While the meshing problem is reducible to {\textsc{MinCliqueCover}},
we have not shown that the meshing problem is NP-Hard.  The meshing
problem is indeed NP-hard for strings of arbitrary length, but in
practice string length is proportional to span size, which is
constant.

\begin{theorem}\label{thm:polytime}
The meshing problem for $S$, a multi-set of strings of constant length, is in $P$. %can be solved in polynomial time.
%is in P
%; equivalently, the min clique cover problem on meshing graphs generated from sets of strings of constant %length is in P.
\end{theorem}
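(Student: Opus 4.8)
The plan is to exploit the fact that when the string length $b$ is a fixed constant, the number of distinct strings that can occur is at most $2^b$, also a constant, so the entire input is summarized by a count vector $(c_v)_{v \in \{0,1\}^b}$ with $c_v \le n$, computable in $O(nb)$ time. As noted after Problem~1, the number of released strings equals $n$ minus the number of groups in the meshing, so it suffices to give a polynomial-time algorithm that partitions the $n$ strings into the minimum number of meshable groups.

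First I would pin down the structure of a single meshable group. Two non-zero strings with the same support never mesh, since for $v \neq 0$ we have $\sum_i v(i)\,v(i) = \sum_i v(i) \ge 1$; and by definition a set of strings meshes iff their supports are pairwise disjoint. Hence a meshable group contains at most one copy of each non-zero string, and the supports of its non-zero members form a family of pairwise disjoint non-empty subsets of $[b]$ — of which there are at most $b$. The all-zero string is the lone exception: it meshes with everything and may appear with arbitrary multiplicity. So, ignoring zero strings, every meshable group is determined by its \emph{profile} — the family of supports of its non-zero members, i.e.\ a partial partition of $[b]$ into non-empty blocks — and the number $M = M(b)$ of profiles depends only on $b$ (it is at most the Bell number $B_{b+1}$), in particular it is constant.

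Next I would dispose of the all-zero strings: if $c_0 > 0$ they can all be dumped into a single group without ever increasing the group count, so the optimum is governed by the minimum number $g^\star$ of meshable groups needed to cover the \emph{non-zero} strings (plus one extra group if $c_0 > 0$ and no non-zero string exists). To compute $g^\star$ I would use either of two interchangeable devices. (i) A dynamic program whose state is a vector $(d_v)_{v \neq 0}$ with $0 \le d_v \le c_v$ and whose value is the fewest groups covering exactly $d_v$ copies of each type $v$; the recurrence chooses a non-empty profile $P$ all of whose blocks $v$ satisfy $d_v \ge 1$ and recurses after subtracting one from each $d_v$ with $v \in P$. There are at most $(n+1)^{2^b-1}$ states and $M$ transitions per state, which is polynomial in $n$ precisely because $b$ is constant. (ii) Equivalently, an integer linear program with one non-negative integer variable $x_P$ per profile $P$ (how many groups use profile $P$), constraints $\sum_{P : v \in P} x_P = c_v$ for each non-zero $v$, and objective $\min \sum_P x_P$; this has a constant number of variables and is therefore solvable in polynomial time by Lenstra's fixed-dimension integer-programming algorithm. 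Standard backtracking recovers an actual optimal meshing, not merely its size.

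The substantive step — and the one I expect to require the most care — is the structural characterization: the observation that duplicated non-zero strings never mesh (so each non-zero type contributes at most once per group) together with the special treatment of the all-zero string (so that the group sizes, though unbounded, are irrelevant and only the constant-size profile matters). Once that is in place, the theorem reduces to the routine fact that an optimization problem with a constant number of integer variables — or, equivalently, a pseudo-polynomial dynamic program over a count vector of constant dimension — is solvable in polynomial time; the NP-hardness for strings of unbounded length alluded to just after the statement is exactly the regime where $b$, and hence $M$ and the dimension of the count vector, cease to be constant.
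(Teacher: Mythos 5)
Your proof is correct, and it rests on the same pivotal observation as the paper's: because $b$ is constant there are only constantly many distinct strings, a non-zero string never meshes with a copy of itself (so a meshable group contains each non-zero type at most once), and hence an entire solution is determined by the multiplicities of constantly many ``group types,'' of which there are only polynomially many combinations. Where you diverge is in how you exploit this. The paper routes through graph coloring on the complement of the meshing graph: it argues that nodes carrying equal strings are interchangeable under any proper coloring, encodes each color by the \emph{signature} of cliques it touches, and then exhaustively enumerates all $(n+1)^{2^{2^b}}$ multi-sets of signatures, checking each for validity. You instead characterize the feasible group types up front as \emph{profiles} --- families of pairwise disjoint non-empty supports, of which there are at most $B_{b+1}$ --- and cast the minimization as an integer program with a constant number of variables (solvable by Lenstra's fixed-dimension algorithm, or even by brute force over the bounded variable ranges), or equivalently as a dynamic program over count vectors of constant dimension. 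Your packaging buys a cleaner argument with no color-swapping lemma, a much smaller constant in the exponent ($B_{b+1}$ versus $2^{2^b}$), and an explicit reconstruction of an optimal meshing rather than just its size; the paper's version needs slightly less machinery, only interchangeability of identical nodes plus exhaustive search. Both are polynomial solely because $b$ is constant, and both correctly set aside the all-zero string before counting.
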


\begin{proof}
We assume without loss of generality that $S$ does not contain the
all-zero string $\str_0$; if it does, since $\str_0$ can be meshed
with any other string and so can always be released, we can solve the
meshing problem for $S \setminus \str_0$ and then mesh each instance
of $\str_0$ arbitrarily.

Rather than reason about {\textsc{MinCliqueCover}} on a
 meshing graph $G$, we consider the equivalent
problem of coloring the complement graph $\bar{G}$ in which there is
an edge between every pair of two nodes whose strings do not mesh. The nodes of $\bar{G}$ can be partitioned into at most $2^b-1$
subsets $N_1 \ldots N_{2^b-1}$ such that all nodes in each
$N_i$ represent the same string $\str_i$.  The induced subgraph of
$N_i$ in $\bar{G}$ is a clique since all its nodes have a 1 in the same position and so cannot be pairwise meshed.  Further, all nodes in $N_i$ have the same set of neighbors.

Since $N_i$ is a clique, at most one node in $N_i$ may be colored with
any color.  Fix some coloring on $\bar{G}$.  Swapping the colors of
two nodes in $N_i$ does not change the validity of the coloring since
these nodes have the same neighbor set.  We can therefore
unambiguously represent a valid coloring of $\bar{G}$ merely by
indicating in which cliques each color appears.

With $2^b$ cliques and a maximum of $n$ colors, there are at most $\lp
n+1 \rparen ^{c}$ such colorings on the graph where $c=2^{2^b}$. This follows because each color used can be associated with a subset of $\{1, \ldots, 2^b\}$ corresponding to which of the cliques have node with this color; we call this subset a \emph{signature} and note there are $c$ possible signatures. A coloring can be therefore be associated with a multi-set of possible signatures where each signature has multiplicity between 0 and $n$; there are $(n+1)^c$ such multi-sets.
This is polynomial
in $n$ since $b$ is constant and hence $c$ is also constant. So we can simply check each coloring for
validity (a coloring is valid iff no color appears in two cliques
whose string representations mesh).  The algorithm returns a
valid coloring with the lowest number of colors from all valid
colorings discovered.
\end{proof}

%Unfortunately, the exponent of $n$ runtime grows doubly exponentially in  $b$ and $b$ can be as large as 256.
%Such a runtime,
Unfortunately, while technically polynomial, the running time of the above algorithm would obviously be prohibitive in practice. Fortunately, as we show, we can exploit the randomness in the strings to design a much faster algorithm.

\subsection{Simplifying the Problem: From \textsc{MinCliqueCover} to \textsc{Matching}}
\label{subsec:matching}
We leverage \Mesh{}'s random allocation to simplify meshing; this random
allocation implies a distribution over the graphs that exhibits useful
structural properties. We first make the following important observation:

\begin{observation}
Conditioned on the occupancies of the strings, edges in the meshing graph  are not three-wise independent.
\end{observation}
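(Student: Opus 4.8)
Three-wise independence of the edge indicators would require that for \emph{every} triple of distinct pairs of nodes the corresponding meshability indicators be mutually independent, so it suffices to exhibit one triple that fails. The obvious witness is a triangle: three strings $\str_1,\str_2,\str_3$ and the indicators $X_{12},X_{13},X_{23}$, where $X_{ij}=1$ iff $\str_i$ and $\str_j$ mesh. Throughout, I work in the space conditioned on the occupancies $k_i=\sum_j \str_i(j)$, so each $\str_i$ is an independent, uniformly random size-$k_i$ subset of $[b]$.

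I would first check that \emph{pairwise} independence does hold, so that ``three-wise'' is the genuine threshold. Indicators on node-disjoint pairs depend on disjoint sets of the independent strings. For two edges sharing a node, say $X_{12}$ and $X_{13}$: conditioning on $\str_1$ makes them independent (as $\str_2,\str_3$ are independent given $\str_1$), and $\Pr[X_{1j}=1\mid \str_1]=\binom{b-k_1}{k_j}/\binom{b}{k_j}$ depends on $\str_1$ only through $k_1$, which is fixed; averaging over $\str_1$ yields $\Pr[X_{12}=1\wedge X_{13}=1]=\Pr[X_{12}=1]\Pr[X_{13}=1]$.

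Next I would break three-wise independence on the triangle. Conditioning additionally on $X_{12}=X_{13}=1$ confines both $\str_2$ and $\str_3$ to the $(b-k_1)$-element complement of $\mathrm{supp}(\str_1)$, where they remain independent and uniform of sizes $k_2,k_3$; hence $\Pr[X_{23}=1\mid X_{12}=X_{13}=1]=\binom{b-k_1-k_2}{k_3}/\binom{b-k_1}{k_3}=g(b-k_1)$, where $g(m):=\binom{m-k_2}{k_3}/\binom{m}{k_3}=\prod_{j=0}^{k_3-1}\bigl(1-k_2/(m-j)\bigr)$. The one elementary fact needed is that $g$ is strictly increasing in $m$ (each factor is), so whenever $k_1\ge 1$ and the occupancies are small enough for pairwise meshing to be possible we get $\Pr[X_{23}\mid X_{12}\wedge X_{13}]=g(b-k_1)<g(b)=\Pr[X_{23}]$; combined with the previous step, $\Pr[X_{12}\wedge X_{13}\wedge X_{23}]=\Pr[X_{12}]\Pr[X_{13}]\cdot g(b-k_1)<\Pr[X_{12}]\Pr[X_{13}]\Pr[X_{23}]$. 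The cleanest instance to state is $b=2$, $k_1=k_2=k_3=1$: each $\str_i$ is a uniform bit, the $X_{ij}$ are pairwise-independent fair coins, yet three strings of occupancy one in a length-two span can never be pairwise disjoint (pigeonhole), so the triple probability is $0\ne(1/2)^3$.

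I do not anticipate a real obstacle. Once the triangle is picked as the witness, everything reduces to (i) the bookkeeping that conditioning on $\str_1$ preserves the fixed occupancy (so the shared-node edges are exactly pairwise independent), and (ii) the one-line monotonicity of $g(m)$. The only point requiring minor care is the hypothesis for the \emph{strict} inequality --- occupancies small enough that all three pairwise-disjointness probabilities are positive --- which is sidestepped entirely by committing to the $b=2$, occupancy-one example.
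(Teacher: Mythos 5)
Your proposal is correct and takes essentially the same route as the paper: the paper's witness is a triangle of three strings whose occupancies sum to more than $b$ (length $16$, six ones each), so that two edges being present forces the third to be absent --- exactly your $b=2$, occupancy-one pigeonhole instance. Your additional verification of pairwise independence and the monotonicity computation $g(b-k_1)<g(b)$ make rigorous the paper's informal remark that conditioning on two shared-node edges ``decreases the probability'' of the third, but the core argument is the same.
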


To see that edges are not three-wise independent consider three random
strings $s_1, s_2, s_3$ of length 16, each with exactly 6 ones. It is
impossible for these strings to all mesh mutually, so their mesh graph
cannot be a triangle.  Hence, if we know that $s_1$ and $s_2$ mesh,
and that $s_2$ and $s_3$ mesh, we know for certain that $s_1$ and
$s_3$ cannot mesh. More generally, conditioning on $s_1$ and $s_2$
meshing and $s_1$ and $s_3$ meshing decreases the probability that
$s_1$ and $s_3$ mesh.
%Note that even when we assume bits are 1 or 0
%independently, edges are still not independent.
%The existence of edge
%($s_4,s_5$) is weak evidence of $s_5$'s low occupancy, increasing the
%probability that edge ($s_5,s_6$) exists.
Below, we quantify this
effect to argue that we can mesh near-optimally by solving the much
easier \textsc{Matching} problem on the meshing graph (i.e.,
restricting our attention to finding cliques of size 2) instead of
\textsc{MinCliqueCover}. Another consequence of the above observation is that we will not be able to appeal to  theoretical results on the standard model of random graphs, \emph{Erd\H{o}s-Renyi graphs}, in which  each possible edge is present with some fixed probability and the edges are fully independent. Instead we will need new algorithms and proofs that only require independence of acyclic collections of edges.

\paragraph{Triangles and Larger Cliques are Uncommon.}
Because of the dependencies across the edges present in a meshing
graph, we can argue that \emph{triangles} (and hence also larger
cliques) are relatively infrequent in the graph and certainly less
frequent than one would expect were all edges independent.  For
example, consider three strings $s_1, s_2, s_3\in \{0,1\}^b$ with
occupancies $r_1, r_2,$ and $r_3$, respectively. The probability they
mesh is
\[
{\binom{b-r_1}{r_2}} \big / {\binom{b}{r_2}} \times {\binom{b-r_{1}-r_2 }{r_3}} \big / {\binom{b}{r_3}} \ . \]

This value is significantly less than would have been the case if the
events corresponding to pairs of strings being meshable were
independent.
%In most practical cases, we will not be interested in meshing strings with low occupancy --- it is often better just to wait until %the last objects on these spans are freed.  For higher occupancies, the expected number of triangles is quite low.
For instance, if $b = 32, r_1=r_2=r_3 = 10$, this probability is so
low that even if there were $1000$ strings, the expected number of
triangles would be less than 2. In contrast, had all meshes been
independent, with the same parameters, there would have been $167$ triangles.

% \paragraph{We Can Restrict Our Attention to Matching.}
\begin{comment}
The above analysis leads us to conclude that, since there are
relatively few triangles in meshing graphs, we can achieve almost the
full benefits of solving \textsc{MinCliqueCover} by only solving
\textsc{Matching}.
\end{comment}

The above analysis suggests that we can focus on finding only cliques
of size 2, thereby solving \textsc{Matching} instead of
\textsc{MinCliqueCover}. The evaluation in
Section~\ref{sec:evaluation} vindicates this approach, and we show a
strong accuracy guarantee for \textsc{Matching} below.
% We also verify this empirically in the Appendix.

\subsection{Theoretical Guarantees}
\label{subsec:analysis}
Since we need to perform meshing at runtime, it is essential that our
algorithm for finding strings to mesh be as efficient as possible. It
would be far too costly in both time and memory overhead to actually
construct the meshing graph and run an existing matching algorithm on
it. Instead, the \sm algorithm (shown in Figure \ref{fig:meshalg})
performs meshing without the need for explicitly constructing the
meshing graph.

For further efficiency, we need to constrain the value of the
parameter $t$, which controls \Mesh{}'s space-time tradeoff. If $t$
were set as large as $n$, then \sm could, in the worst case,
exhaustively search all pairs of spans between the left and right
sets: a total of $n^2/4$ probes.  In practice, we want to choose a
significantly smaller value for $t$ so that \Mesh{} can always
complete the meshing process quickly without the need to search all
possible pairs of strings.

\begin{lemma}
If $t=k/q$ for some user defined parameter $k>1$, \sm finds a matching
of size at least $n(1-e^{-2k})/4$ between the left and right span sets
with probability approaching 1 as $n\geq 2k/q$ grows.
\end{lemma}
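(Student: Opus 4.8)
The plan is to control the number of left spans that \sm{} leaves unmatched, and then convert a bound on its expectation into a high-probability bound. Write $m=n/2$ for the size of each of $S_l$ and $S_r$, and condition throughout on the occupancies of all $n$ spans, so that the only remaining randomness is in where the ones of each span sit. Fix a left span $\ell$. During a run, $\ell$ is probed in round $i$ against whichever right span currently occupies the cyclically shifted position (line~\ref{li:condition}); call these targets $r^{(0)},r^{(1)},\dots$. Because a probe that meshes is committed immediately (lines~\ref{li:condition}--\ref{li:mesh}) and $\ell$'s target in a given round cannot have been removed earlier \emph{within} that round, $\ell$ ends unmatched if and only if it fails to mesh every one of its (at most $t$) targets.

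The first key step is to argue that these failure events compound essentially like independent trials, each succeeding with probability at least $q$. Here I would invoke the positive counterpart of the Observation of \S\ref{subsec:matching}: although the edges of the meshing graph are not three-wise independent, any \emph{acyclic} collection of them is mutually independent once occupancies are fixed, and conditioning on the presence or absence of edges that do not both touch a given pair leaves that pair's meshing probability unchanged. The edges $\{\ell\,r^{(0)}\},\{\ell\,r^{(1)}\},\dots$ form a star, hence a forest; the delicate point is that the \emph{identity} of $r^{(i)}$ is revealed adaptively by the run so far, so one must expose edges in the order \sm{} probes them and verify that, at the moment $\{\ell\,r^{(i)}\}$ is revealed, it together with the already-exposed edges still behaves acyclically around $\ell$ and $r^{(i)}$ (intuitively, every previously exposed edge incident to $\ell$ is one of its earlier failed probes, and such edges cannot reconnect $\ell$ to a fresh target). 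Granting this, $\Pr[\ell\text{ unmatched}\mid\text{occupancies}]\le(1-q)^{t}=(1-q)^{k/q}\le e^{-k}$ for every occupancy profile consistent with global meshing probability $q$, so summing over the $m$ left spans gives $\mathbb{E}[\,\#\text{unmatched left spans}\,]\le m\,e^{-k}$, i.e.\ an expected matching of size at least $m(1-e^{-k})$. The constants $1-e^{-2k}$ and $n/4$ in the statement are the robust, slack-absorbing form of this estimate: since $(1-e^{-k})/2\ge(1-e^{-2k})/4$, the clean bound is comfortably stronger than what is claimed, leaving room for the adaptive-conditioning losses, for pairs whose individual meshing probability dips below $q$, and for the concentration error below.

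Finally I would upgrade ``in expectation'' to ``with probability approaching $1$''. The count $X$ of unmatched left spans is a function of the meshing indicators of all probed pairs; flipping one such indicator changes the subsequent run, but only through a cascade that perturbs $X$ by a bounded amount, so a bounded-differences / Doob-martingale argument (exposing the probes, or the $t$ rounds, one at a time) yields $X\le\mathbb{E}[X]+o(n)$ except with probability $e^{-\Omega(\cdot)}\to0$. The hypothesis $n\ge 2k/q = 2t$ is exactly what forces $m$ (and the number of probes per span) to grow so that this deviation term is of lower order than the $\Theta(n)$ gap between $m(1-e^{-k})$ and $n(1-e^{-2k})/4$. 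Combining the pieces, with probability tending to $1$ the matching \sm{} returns has size at least $m(1-e^{-k})-o(n)\ge n(1-e^{-2k})/4$.

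The step I expect to be the crux is the first one: making rigorous that a left span's successive probes behave like independent, roughly fair trials. The cyclic-shift indexing on lists that shrink as pairs are removed makes the targets $r^{(i)}$ data-dependent, and the three-wise dependence of meshing edges forbids cavalier conditioning; the entire argument hinges on isolating, around each span, a forest of edges and reducing to the acyclic-independence fact. By comparison, the concentration step is routine, and the manipulation with $t=k/q$ is bookkeeping.
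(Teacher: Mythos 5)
Your overall strategy --- bound the probability that a fixed left span survives all $t$ of its probes, sum over left spans, then concentrate --- is a genuinely different decomposition from the paper's, which never reasons about the algorithm's adaptive execution at all. The paper instead defines a non-adaptive event, a \emph{good match} $(u_i,v_j)$: the edge $\{u_i,v_j\}$ is present, $u_i$ has no edge to any $v_{j'}$ with $i\le j'<j$, and no $u_{i'}$ with $i<i'\le j$ has an edge to $v_j$. These events are determined by the static edge structure; each depends only on a union of two stars sharing the edge $\{u_i,v_j\}$ (a tree), so the acyclic-independence property applies verbatim and gives $\Pr[(u_i,v_j)\mbox{ is good}]=q(1-q)^{2(j-i)}$, and one checks combinatorially that \sm{} is guaranteed to realize every good match. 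That device is precisely what your proof is missing, and the step you yourself flag as the crux is where your argument breaks.

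Concretely: by the time $\ell$ is probed against its $i$-th target $r^{(i)}$, the run has already exposed the outcomes of many probes \emph{not} incident to $\ell$ --- these are exactly what determined that $r^{(i)}$ is still present and sits at the shifted position. The exposed edges can contain a path from $\ell$ to $r^{(i)}$ (for instance, $\ell$ failed against $r^{(0)}$, and some $\ell'$ was probed against both $r^{(0)}$ and $r^{(i)}$), so the edge $\{\ell,r^{(i)}\}$ closes a cycle with edges already conditioned on. The only independence available --- and the only one the paper establishes or uses --- is for acyclic collections; the Observation in \S\ref{subsec:matching} shows the edges are not even three-wise independent, so you cannot conclude that the conditional success probability of the $i$-th probe is $q$. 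Your parenthetical defense addresses only edges incident to $\ell$, not these cycle-closing paths through intermediate spans. A secondary gap: the bounded-differences step asserts that flipping one probe outcome perturbs the final matching by $O(1)$, but a single flip changes which spans are removed and hence shifts the identities of essentially all later comparisons in the shrinking lists, so the Lipschitz constant is not obviously bounded. The paper's good-match counting sidesteps both problems at once --- the $X_i$ within a residue class mod $t$ are sums of independent indicators, so a plain Chernoff-plus-union bound suffices --- at the cost of the weaker per-span success probability $(1-e^{-2k})/2$ in place of your (unproven) $1-e^{-k}$.
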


\begin{proof}
%\sm in essence creates a bipartite meshing graph by splitting the span set in half.
Let $S_l=\{v_1, v_2, \ldots v_{n/2}\}$ and $S_r=\{u_1,
u_2, \ldots u_{n/2}\}$. Let $t=k/q$ where
$k>1$ is some arbitrary constant. For $u_i\in S_l$ and $i \leq j \leq
j+t$, we say $(u_i,v_j)$ is a \emph{good match} if all the following
properties hold: (1) there is an edge between $u_i$ and $v_j$, (2)
there are no edges between $u_i$ and $v_{j'}$ for $i\leq j'<j$, and
(3) there are no edges between $u_{i'}$ and $v_{j}$ for $i< i'\leq j$.

We observe that \sm finds any good match, although it may
also find additional matches. It therefore suffices to consider only
the number of good matches. The probability $(u_i,v_j)$ is a good
match is $q(1-q)^{2(j-i)}$ by appealing to the fact that the collection of edges under consideration is acyclic. Hence, $\Pr(u_i \mbox{ has a good match})$
is
\begin{align*}
r:= q \sum_{i=0}^{k/q-1} \lp 1-q\rparen ^{2i} = q \frac{1-(1-q)^{2k/q}}{1-(1-q)^2}
% \geq q \frac{1-e^{-2k}}{2q-q^2}
 > \frac{1-e^{-2k}}{2} \ .
\end{align*}

To analyze the number of good matches, define $X_i = 1$ iff $u_i$ has
a good match. Then, $\sum_i X_i$ is the number of good matches. By
linearity of expectation, the expected number of good matches is
$rn/2$. We decompose $\sum_i X_i$ into \[Z_0+Z_1+\ldots + Z_{t-1} ~~\mbox{
  where }~~ Z_{j} = \sum_{i\equiv j \bmod t} X_i \ .\] Since each
$Z_j$ is a sum of $n/(2t)$ independent variables, by the Chernoff
bound, $P\lp Z_j < \lp 1-\epsilon \rparen E[Z_j]\rparen \leq \exp\lp -
\epsilon^2 r n/(4t)\rparen$.  By the union bound,
$$P\lp X < \lp 1-\epsilon \rparen rn/2\rparen \leq t \exp\lp - \epsilon^2 r
n/(4t)\rparen$$ and this becomes arbitrarily small as $n$ grows.
\end{proof}

In the worst case, the algorithm checks $nk/2q$ pairs. For our
implementation of \Mesh, we use a static value of $t = 64$; this value
enables the guarantees of Lemma 5.1 in cases where significant meshing
is possible.  As Section~\ref{sec:evaluation} shows, this value for
$t$ results in effective memory compaction with modest performance
overhead.

\subsection{Summary of Analytical Results}
We show the problem of meshing is reducible to a graph problem,
\textsc{MinCliqueCover}.  While solving this problem is infeasible, we
show that probabilistically, we can do nearly as well by finding the
maximum \textsc{Matching}, a much easier graph problem. We analyze our
meshing algorithm as an approximation to the maximum matching on a
random meshing graph, and argue that it succeeds with high
probability.
%Finally, we prove a new lower bound on the maximum matching for graphs
%based on degree distribution.
As a corollary of these results, \Mesh breaks the Robson
bounds with high probability.

%for a broad class of applications.
%As long as programs do not alter
%programs that do not alter their allocation behavior in response to
%addresses the allocator returns.

\begin{comment}
  An adversary attempting to create an
unmeshable memory state requires the capability to position objects at
the same locations in multiple spans.  But an address-oblivious
adversary can only do this with low probability - since allocation
offsets within a span are random, the best it can do is free some
objects from each span and hope that some of the remaining objects
reside at the same offset.  Since objects are distributed uniformly at
random from within spans, our analytical results hold and therefore
there is either a large maximum matching on the resulting meshing
graph (indicating significant compaction is possible) or spans are so
full that fragmentation is not significant.
\end{comment}

\section{Evaluation}
\label{sec:evaluation}

% \end{figure*}

Our evaluation answers the following questions: Does \Mesh reduce
overall memory usage with reasonable performance overhead?
($\S$\ref{sec:evaluation:overallmemory}) Does randomization
provide empirical benefits beyond its analytical guarantees?
($\S$\ref{sec:evaluation:practical})

% (3) Is \Mesh nearly as effective as custom defragmentation solutions?

%% Our evaluation shows that \Mesh doesn't increase memory consumption in
%% the average case, that it significantly reduces memory consumtion
%% under fragmentation, and that it does not impose an unrealistic CPU
%% overhead.

%% We evaluate \Mesh in three ways.  We show that under a workload
%% derived from the Redis test suite, \Mesh is able to recover from
%% fragmentation in ways that other allocators can not match.  Next we
%% show that running the Redis server with mesh doesn't impose a
%% performance overhead on the workload generated by
%% \texttt{redis-benchmark}. Finally, we run SPECint 2006 with mesh and
%% show that it imposes only a modest overhead (geomean: 15\%).

\subsection{Experimental Setup}
\label{subsec:memory-use}

We perform all experiments on a MacBook Pro with 16 GiB of RAM and an
Intel i7-5600U, running Linux 4.18 and Ubuntu Bionic. We use glibc
2.26 and jemalloc 3.6.0 for SPEC2006, Redis 4.0.2, and Ruby 2.5.1.
Two builds of Firefox 57.0.4 were compiled as release builds, one with
its internal allocator disabled to allow the use of alternate
allocators via \texttt{LD\_PRELOAD}.  SPEC was compiled with clang
version 4.0 at the \texttt{-O2} optimization level, and \Mesh was
compiled with gcc 8 at the \texttt{-O3} optimization level and with
link-time optimization (\texttt{-flto}).

% tcmalloc (gperftools 2.6.1), Hoard (git commit \texttt{a0e46aa1}), and DieHard (git commit \texttt{7619bffa})

\textbf{Measuring memory usage:} To accurately measure the memory
usage of an application over time, we developed a Linux-based utility,
\texttt{mstat}, that runs a program in a new memory control
group~\cite{redhat:cgroups}. \texttt{mstat} polls the resident-set size
(RSS) and kernel memory usage statistics for all processes in the
control group at a constant frequency.  This enables us to account for
the memory required for larger page tables (due to meshing) in our
evaluation. We have verified that \texttt{mstat} does not perturb
performance results.
%We verified that a full run of SPECint 2006 with benchmarks run under
%\texttt{mstat} has the same performance as run without, so the
%additional accounting being performed does not perturb performance
%results.

\begin{comment}
\begin{figure*}[t]
  \centering\small\textbf{~~~~~~~~~SPECint 2006 Relative Mean RSS}\\[0.3em]
  \includegraphics{}
  \caption{\textbf{\Mesh maintains low average memory consumption.}  The SPEC benchmark
    suite does not suffer from serious external fragmentation that
    \Mesh can recover from, but it maintains lower average memory consumption than glibc.
  For \texttt{403.gcc}, \Mesh reduces mean RSS by 15\% (from 220MB to 188MB).}
  \label{fig:spec-mean-rel}
\end{figure*}
\end{comment}

\begin{comment}
\begin{figure*}[!t]
  \centering\small\textbf{~~~~~~~~~SPECint 2006 Absolute Mean RSS}\\[0.3em]
  \includegraphics{}
  \caption{\textbf{\Mesh average memory consumption (absolute).}}
  \label{fig:spec-mean}
\end{figure*}
\end{comment}

\begin{comment}
\begin{figure*}[!t]
  \centering\small\textbf{~~~~~~~~~SPECint 2006 Performance}\\[0.3em]
  \includegraphics{}
  \caption{\textbf{\Mesh imposes modest runtime overhead.} Performance is
    normalized to glibc. Overall, \Mesh's execution time overhead
    is 15\% higher than glibc's (geometric mean).}
  \label{fig:spec-speed}
\end{figure*}
\end{comment}

\subsection{Memory Savings and Performance Overhead}
\label{sec:evaluation:overallmemory}

We evaluate \Mesh{}'s impact on memory consumption and runtime across
the Firefox web browser, the Redis data structure store, and the
SPECint2006 benchmark suite.

\subsubsection{Firefox}
\label{sec:firefox}

Firefox is an especially challenging application for memory reduction
since it has been the subject of a five year effort to reduce its
memory footprint~\cite{awsy}. To evaluate \Mesh{}'s impact on
Firefox's memory consumption under realistic conditions, we measure
Firefox's RSS while running the Speedometer 2.0 benchmark.
Speedometer was constructed by engineers working on the Google Chrome
and Apple Safari web browsers to simulate the patterns in use on
websites today, stressing a number of browser subsystems like DOM
APIs, layout, CSS resolution and the JavaScript engine.  In Firefox,
most of these subsystems are multi-threaded, even for a single
page~\cite{ff:quantum}.  The benchmark comprises a number of small
``todo'' apps written in a number of different languages and styles,
with a final score computed as the geometric mean of the time taken by
the executed tests.

\begin{figure}[t!]% {.4\linewidth}
  \centering
  \vtop{%
    \centering
    \vskip0pt
    \hbox{%
      \includegraphics[width=.45\textwidth]{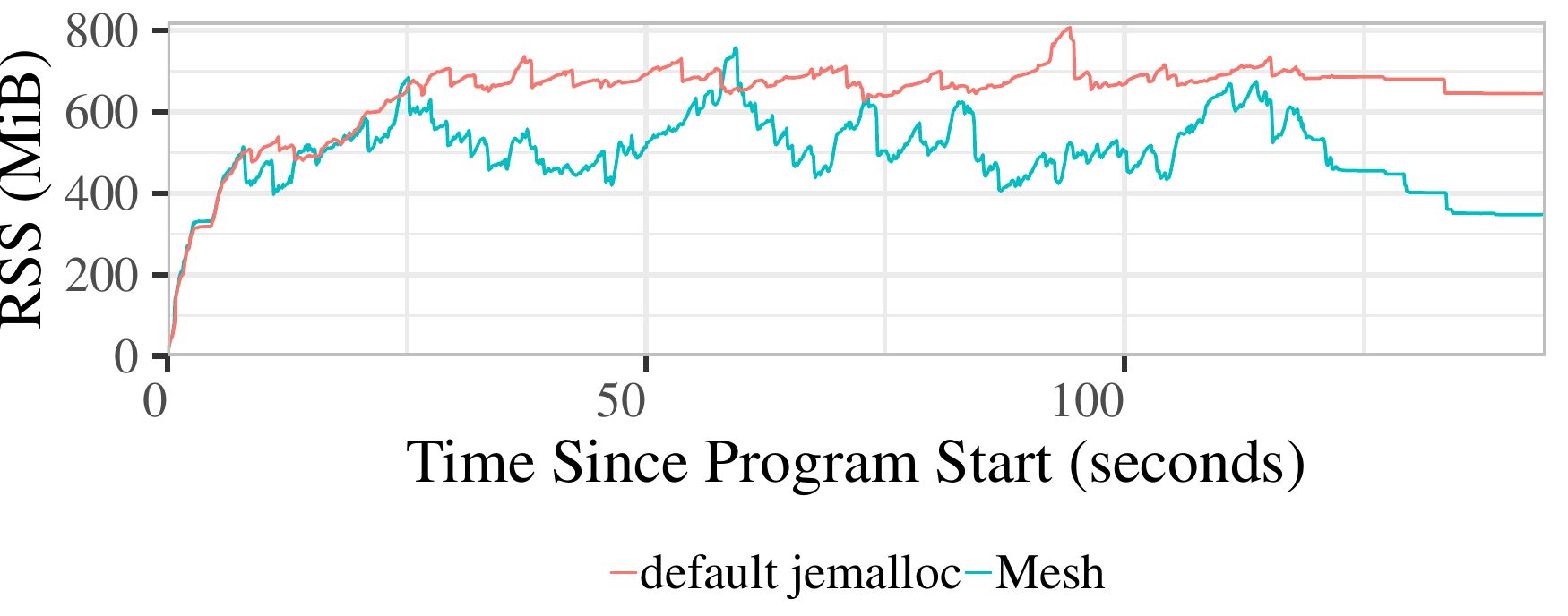}
    }%
  }
  \caption{\textbf{Firefox:} \Mesh decreases mean heap size by 16\%
    over the course of the Speedometer 2.0 benchmark compared with the
    version of jemalloc bundled with Firefox, with less than a 1\%
    change in the reported Speedometer score (\S\ref{sec:firefox}).
    \label{fig:firefox-heap}}
\end{figure}

We test Firefox in single-process mode (disabling content sandboxing,
which spawns multiple processes) under the \texttt{mstat} tool to
record memory usage over time. Our test opens a tab, loads the
Speedometer page from a local server, waits 2 seconds, and then
automatically executes the test.  We record the reported score
at the end of the benchmark run and calculate average memory
usage recorded by \texttt{mstat}.  We tested both a standard release
build of Firefox, along with a release build that did not bundle
Mozilla's fork of jemalloc (hereafter referred to as
\texttt{mozjemalloc}) and instead directly called \texttt{malloc}-related
functions, with \Mesh included via \texttt{LD\_PRELOAD}.  We report
the average resident set size over the course of the benchmark and a
15 second cooldown period afterward, collecting three runs per
allocator.

\Mesh reduces the memory consumption of Firefox by 16\% compared to
Firefox's bundled jemalloc allocator. \Mesh requires 530 MB ($\sigma =
22.4$ MB) to complete the benchmark, while the Mozilla allocator needs
632 MB ($\sigma = 25.3$ MB). This result shows that \Mesh can
effectively reduce memory overhead even in widely used and heavily
optimized applications. \Mesh achieves this savings with less than a 1\%
reduction in performance (measured as the score reported by
Speedometer).

Figure~\ref{fig:firefox-heap} shows memory usage over the course of a
Speedometer benchmark run under \Mesh and the default jemalloc
allocator.  While memory usage under both peaks to similar levels,
Mesh is able to keep heap size consistently lower.

\subsubsection{Redis}
\label{redis-section}

Redis is a widely-used in-memory data structure server.  Redis 4.0
introduced a feature called ``active
defragmentation''~\cite{jemalloc:exposehints,redis:announcement}.
Redis calculates a fragmentation ratio (RSS over sum of active
allocations) once a second.  If this ratio is too high, it triggers a
round of active defragmentation. This involves making a fresh copy of
Redis's internal data structures and freeing the old ones. Active
defragmentation relies on allocator-specific APIs in jemalloc both for
gathering statistics and for its ability to perform allocations that
bypass thread-local caches, increasing the likelihood they will be
contiguous in memory.

\begin{figure}[t!]
  \centering
  \vtop{%
    \centering
    \vskip0pt
    \hbox{%
      \includegraphics[width=.45\textwidth]{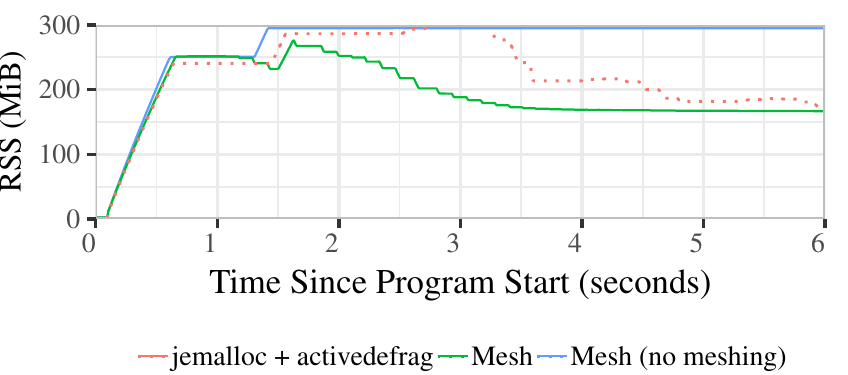}
    }%
  }
  \caption{\textbf{Redis:} \Mesh automatically achieves significant
    memory savings (39\%), obviating the need for its custom,
    application-specific ``defragmentation'' routine
    (\S\ref{redis-section}).
    \label{fig:redis-results}}
\end{figure}

%This defragmentation
%does not require scanning the heap for pointers because Redis's memory
%primarily consists of a large hash table associating keys with values.

%% A case where active defragmentation is useful is when redis is used as
%% an LRU cache with a maximum heap size set.  If there is a phase shift
%% in the size of objects being added to the cache, redis can suffer from
%% external fragmentation, requiring a significantly larger
%% working-set-size compared to live size of redis allocations.

%% Redis tracks the size of allocations (either through the use of a
%% \texttt{malloc\_usable\_size} API, or by prepending the size to every
%% allocation

We adapt a benchmark from the official Redis test suite to measure how
\Mesh's automatic compaction compares with Redis's active
defragmentation, as well as against the standard glibc allocator. This
benchmark runs for a total of 7.5 seconds, regardless of allocator. It
configures Redis to act as an LRU cache with a maximum of 100 MB of
objects (keys and values).  The test then allocates 700,000 random
keys and values, where the values have a length of 240 bytes.
Finally, the test inserts 170,000 new keys with values of length 492.
Our only change from the original Redis test is to increase the value
sizes in order to place all allocators on a level playing field with
respect to \emph{internal} fragmentation; the chosen values of 240 and
492 bytes ensure that tested allocators use similar size classes for
their allocations. We test \Mesh with Redis in two configurations:
with meshing always on and with meshing disabled, both without any
input or coordination from the redis-server application.

%% Our only change from the original Redis test is to
%% increase the key size in order to place all allocators on a level
%% playing field with respect to \emph{internal} fragmentation; both
%% Hoard and DieHard use power-of-two size classes.

Figure~\ref{fig:redis-results} shows memory usage over time for Redis
under \Mesh, as well as under jemalloc with Redis's
``activedefrag'' enabled, as measured by \texttt{mstat}
(\S\ref{subsec:memory-use}).  The ``activedefrag'' configuration
enables active defragmentation after all objects have been added to
the cache.

Using \Mesh automatically and portably achieves the same heap size
reduction (39\%) as Redis's active defragmentation.  During most of
the 7.5s of this test Redis is idle; Redis only triggers
active defragmentation during idle periods. With \Mesh, insertion
takes 1.76s, while with Redis's default of jemalloc, insertion takes
1.72s. \Mesh's compaction is additionally
\textit{significantly} faster than Redis's active
defragmentation. During execution with \Mesh{}, a total of 0.23s are
spent meshing (the longest pause is 22 ms), while active
defragmentation accounts for 1.49s ($5.5\times$ slower). This high
latency may explain why Redis disables ``activedefrag'' by
default.

%\Mesh's always-on defragmentation comes at a cost, the Redis server
%running with \Mesh is 29\% slower than the default jemalloc (which
%performs no defragmentation).

\subsubsection{SPEC Benchmarks}

%Figure~\ref{fig:spec-mean-rel} presents mean memory consumption (RSS)
%across the SPECint 2006 benchmark suite, normalized to glibc.

%Figure~\ref{fig:spec-mean} presents absolute average memory
%usage.

%% We show that meshing can yield substantial memory savings across a
%% suite of benchmarks and real-world applications, with little runtime
%% overhead. For one of the most allocation-heavy benchmarks in the
%% SPECint suite, \texttt{perlbench}, \Mesh reduces peak RSS by 13\%
%% (across the suite, it decreases average memory consumption by 2.5\%).
%% In longer-lived, memory-intensive applications, \Mesh saves more
%% memory. Replacing Firefox's default allocator with \Mesh reduces
%% average memory consumption in Speedometer 2, a modern browser
%% benchmark, by 11\% (from 651MB to 579MB) with a 6\% reduction in
%% performance.  For the Redis data structure server, \Mesh reduces
%% resident-set size (RSS) by 34\% compared to conventional allocators in
% a fragmentation-heavy workload.

Most of the SPEC benchmarks are not particularly compelling targets
for \Mesh because they have small overall footprints and do not
exercise the memory allocator. Across the entire SPECint 2006
benchmark suite, \Mesh modestly decreases average memory consumption
(geomean: $-2.4$\%) vs. glibc, while imposing minimal execution time
overhead (geomean: 0.7\%).

However, for allocation-intensive applications with large footprints,
\Mesh is able to substantially reduce peak memory consumption. In
particular, the most allocation-intensive benchmark is
\texttt{400.perlbench}, a Perl benchmark that performs a number of
e-mail related tasks including spam detection (SpamAssassin). With
glibc, its peak RSS is 664MB. \Mesh reduces its peak RSS to 564MB (a
15\% reduction) while increasing its runtime overhead by only
3.9\%.

%\Mesh significantly reduces memory consumption for the
%allocation-heavy SPEC application (\texttt{400.perlbench}: 13\%) while
%imposing a minimal performance penalty.

% As Figure~\ref{fig:spec-speed} shows,
\begin{comment}
though in some cases, its overhead is high. We observe that these
cases correspond with the cases when DieHard also imposes significant
overhead; since both DieHard and \Mesh are randomized memory managers,
we attribute this overhead to degraded locality.
\end{comment}

\subsection{Empirical Value of Randomization}
\label{sec:evaluation:practical}

Randomization is key to \Mesh{}'s analytic guarantees; we evaluate
whether it also can have an observable empirical impact on its ability
to reclaim space. To do this, we test three configurations of \Mesh:
(1) meshing disabled, (2) meshing enabled but randomization disabled,
and (3) \Mesh with both meshing and randomization enabled (the
default).

We tested these configurations with Firefox and Redis, and found no
significant differences when randomization was disabled; we believe
that this is due to the highly irregular (effectively random)
allocation patterns that these applications exhibit. We hypothesized
that a more regular allocation pattern would be more challenging for a
non-randomized baseline. To test this hypothesis, we wrote a synthetic
microbenchmark with a regular allocation pattern in Ruby. Ruby is an
interpreted programming language popular for implementing web
services, including GitHub, AirBnB, and the original version of
Twitter.  Ruby makes heavy use of object-oriented and functional
programming paradigms, making it allocation-intensive.  Ruby is
garbage collected, and while the standard MRI Ruby implementation
(written in C) has a custom GC arena for small objects, large objects
(like strings) are allocated directly with \texttt{malloc}.

Our Ruby microbenchmark repeatedly performs a sequence of string
allocations and deallocations, simulating the effect of accumulating
results from an API and periodically filtering some out. It allocates
a number of strings of a fixed size, then retaining references 25\% of
the strings while dropping references to the rest.  Each iteration the
length of the strings is doubled.  The test requires only a fixed 128
MB to hold the string contents.

\begin{figure}[t!]% {.4\linewidth}
  \centering
  \vtop{%
    \centering
    \vskip0pt
    \hbox{%
      \includegraphics[width=.45\textwidth]{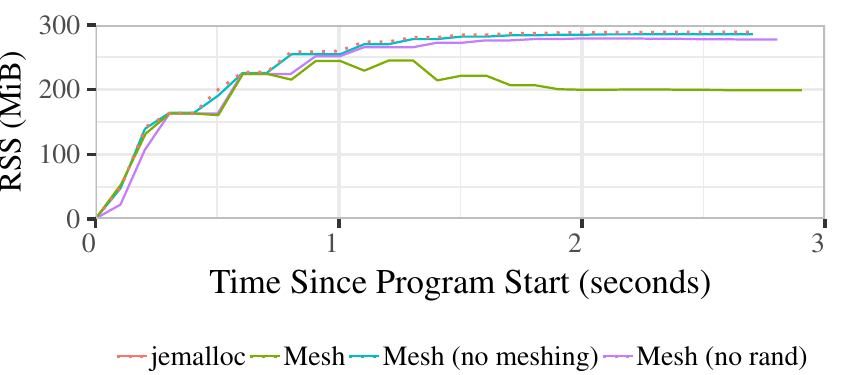}
    }%
  }
  \caption{\textbf{Ruby benchmark:} \Mesh is able to decrease mean heap
    size by 18\% compared to \Mesh with randomization disabled and
    non-compacting allocators ($\S$\ref{sec:evaluation:practical}).
    \label{fig:ruby-frag}}
\end{figure}

Figure~\ref{fig:ruby-frag} presents the results of running this
application with the three variants of \Mesh{} and jemalloc; for this
benchmark, jemalloc and glibc are essentially indistinguishable. With
meshing disabled, \Mesh exhibits similar runtime and heap size to
jemalloc. With meshing enabled but randomization disabled, \Mesh
imposes a 4\% runtime overhead and yields only a modest 3\% reduction in
heap size.

Enabling randomization in \Mesh increases the time overhead to 10.7\%
compared to jemalloc, but the use of randomization lets it
significantly reduce the mean heap size over the execution time of the
microbenchmark (a 19\% reduction). The additional runtime overhead is
due to the additional system calls and memory copies induced by the
meshing process.  This result demonstrates that randomization is not
just useful for providing analytical guarantees but can also be
essential for meshing to be effective in practice.

\subsection{Summary of Empirical Results}

For a number of memory-intensive applications, including aggressively
space-optimized applications like Firefox, \Mesh can substantially
reduce memory consumption (by 16\% to 39\%) while imposing a modest
impact on runtime performance (e.g., around 1\% for Firefox and
SPECint 2006). We find that \Mesh{}'s randomization can enable
substantial space reduction in the face of a regular allocation
pattern.

%We find that \Mesh{}'s use of randomization can be 
%to reclaim memory

%slightly memory consumption across the SPEC benchmark
%suite and low performance overhead. For two real,
%memory-intensive applications (Firefox and Redis), \Mesh's compaction
%reduces memory consumption significantly with minimal performance
%impact.

%  both reach a steady state where the occupancy of spans containing smaller (240 byte) objects is between 60-65\%, and there are very few spans under half full.

\begin{comment}
  \begin{figure*}[!t]
  \centering\small\textbf{~~~~~~~~~SPECint 2006 Relative Peak RSS}\\[0.3em]
  \includegraphics{}
  \caption{\textbf{\Mesh has little impact on peak memory consumption.}
    The meshing algorithm requires additional bookkeeping and
    temporary allocations within the allocator, but these do not
    translate to increased memory usage across SPEC, and sometimes reduce
    it. \Mesh reduces \texttt{400.perlbench}'s peak memory consumption
    by 13.7\% (from 663MB to 575MB).}
  \label{fig:spec-peak-rel}
\end{figure*}
\end{comment}

%\section{Discussion}
%\label{sec:discussion}
%...

\section{Related Work}
\label{sec:related-work}

\paragraph{Hound:}
\label{sec:hound}
Hound is a memory leak detector for C/C++ applications that introduced
meshing (a.k.a. ``virtual compaction''), a mechanism that \Mesh{}
leverages~\cite{1542521}. Hound combines an age-segregated heap with
data sampling to precisely identify leaks. Because Hound cannot
reclaim memory until every object on a page is freed, it relies on a
heuristic version of meshing to prevent catastrophic memory
consumption. Hound is unsuitable as a replacement general-purpose
allocator; it lacks both \Mesh's theoretical guarantees and space and
runtime efficiency (Hound's repository is missing files and it does
not build, precluding a direct empirical comparison here). The Hound
paper reports a geometric mean slowdown of $\approx 30\%$ for
SPECint2006 (compared to \Mesh{}'s 0.7\%), slowing one benchmark
(\texttt{xalancbmk}) by almost $10\times$. Hound also generally
\emph{increases} memory consumption, while \Mesh often substantially
decreases it.

%Finally, Hound was designed for
%32-bit Linux applications, making more direct comparisons to Mesh
%(which is focused on 64-bit applications) hard.

\paragraph{Compaction for C/C++:}
Previous work has described a variety of manual and compiler-based
approaches to support compaction for C++. Detlefs shows that if
developers use annotations in the form of smart pointers, C++ code can
also be managed with a relocating garbage
collector~\cite{detlefs:1992:gc}.  Edelson introduced GC support
through a combination of automatically generated smart pointer classes
and compiler transformations that support relocating
GC~\cite{edelson:1992:precompilingcgc}. Google's Chrome uses an
application-specific compacting GC for C++ objects called Oilpan that
depends on the presence of a single event
loop~\cite{google:oilpan}. Developers must use a variety of smart
pointer classes instead of raw pointers to enable GC and
relocation. This effort took years. Unlike these approaches, \Mesh is
fully general, works for unmodified C and C++ binaries, and does not
require programmer or compiler support; its compaction approach is
orthogonal to GC.

CouchDB and Redis implement \emph{ad hoc} best-effort compaction,
which they call ``defragmentation''.  These work by iterating through
program data structures like hash tables, copying each object's
contents into freshly-allocated blocks (in the hope they will be
contiguous), updating pointers, and then freeing the old
objects~\cite{jemalloc:exposehints,redis:announcement}. This
application-specific approach is not only inefficient (because it may
copy objects that are already densely packed) and brittle (because it
relies on internal allocator behavior that may change in new
releases), but it may also be ineffective, since the allocator cannot
ensure that these objects are actually contiguous in memory. Unlike
these approaches, \Mesh performs compaction efficiently and its
effectiveness is guaranteed.

\paragraph{Compacting garbage collection in managed languages:}
%% Recently Baker et al. showed that with compiler support you could
%% achieve accurate GC for C and C++~\cite{baker:2009:accurategc}.
Compacting garbage collection has long been a feature of languages
like LISP and
Java~\cite{hansen:1969:compaction,fenichel:1969:compaction}. Contemporary
runtimes like the Hotspot JVM~\cite{microystems2006memory}, the .NET
VM~\cite{microsoft:dotnet-gc}, and the SpiderMonkey JavaScript
VM~\cite{mozilla:spidermonkey-compaction} all implement compaction as
part of their garbage collection algorithms. \Mesh{} brings the
benefits of compaction to C/C++; in principle, it could also be used
to automatically enable compaction for language implementations that
rely on non-compacting collectors.

\paragraph{Bounds on Partial Compaction:}
Cohen and Petrank prove upper and lower bounds on defragmentation via
partial compaction~\cite{Cohen:2017:LPC:3050768.2994597,
  Cohen:2013:LPC:2491956.2491973}. In their setting, corresponding to
managed environments, \emph{every} object \emph{may} be relocated to
any free memory location; they ask what space savings can be achieved
if the memory manager is only allowed to relocate a bounded number of
objects. By contrast, \Mesh{} is designed for unmanaged languages
where objects \emph{cannot} be arbitrarily relocated.

%\paragraph{Hardware support:}
%Seshadri \emph{et al.} present a hardware proposal that would allow
%for ``page overlays'' at a cache-line
%granularity~\cite{Seshadri:2015:POE:2749469.2750379}. \Mesh could use
%such a facility instead of page remapping to mesh pages containing
%objects of a cache line size (e.g., 64 bytes) or larger.

\paragraph{PCM fault mitigation:}
Ipek \emph{et al.} use a technique similar to meshing to address the
degradation of phase-change memory (PCM) over the lifetime of a
device~\cite{ipek:2010:dynamic-replication}.  The authors introduce
dynamically replicated memory (DRM), which uses pairs of PCM pages
with non-overlapping bit failures to act as a single page of
(non-faulty) storage.  When the memory controller reports a page with
new bit failures, the OS attempts to pair it with a complementary
page. A random graph analysis is used to justify this greedy
algorithm.

DRM operates in a qualitatively different domain than \Mesh.  In DRM,
the OS occasionally attempts to pair newly faulty pages
against a list of pages with static bit failures.  This process is
incremental and local.  In \Mesh, the occupancy of spans in the heap
is more dynamic and much less local. \Mesh solves a full,
non-incremental version of the meshing problem each cycle.
Additionally, in DRM, the random graph describes an error model rather
than a design decision; additionally, the paper's analysis is flawed.
The paper erroneously claims that the resulting graph is a simple
random graph; in fact, its edges are not independent (as we show in
\S\ref{subsec:matching}).  This invalidates the claimed performance
guarantees, which depend on properties of simple random graphs. In
contrast, we prove the efficacy of our original \sm algorithm for
\Mesh using a careful random graph analysis.

\begin{comment}
  To avoid catastrophic memory consumption,
Hound employs virtual compaction, a non-randomized, best-effort form
of meshing. Their approach depends on heuristics and does not employ
randomization or come with any guarantees of fragmentation
recovery.
  Hound is not intended to be
space-efficient (and it is not) but rather to find leaks. Hound
identifies leaks by segregating objects by allocation site, protects
``cold'' pages, and delays reuse of pages until every object on that
page has been freed.
\end{comment}

%% Virtual memory operations have additionally been use for compaction in
%% a Java garbage collector~\cite{wegiel:2008:mapping-collector} in a
%% novel collector, as well as in a C allocator that traded address space
%% usage for reliability and security~\cite{1346296}.

%Microsoft's C++/CLI? C++ with GC'ed, moveable objects.

% http://www.filpizlo.com/papers/baker-ccpe09-accurate.pdf

%Theory?

%% Objects with similar lifetimes tend to appear
%% together~\cite{wilson:1995:survey}.

\section{Conclusion}
\label{sec:conclusion}

This paper introduces \Mesh{}, a memory allocator that efficiently
performs \textit{compaction without relocation} to save memory for
unmanaged languages.  We show analytically that \Mesh{} provably
avoids catastrophic memory fragmentation with high probability, and
empirically show that \Mesh{} can substantially reduce memory
fragmentation for memory-intensive applications written in C/C++ with
low runtime overhead. We have released \Mesh as an open source
project; it can be used with arbitrary C and C++ Linux and Mac OS X
binaries~\cite{mesh-github}. In future work, we plan to explore
integrating \Mesh{} into language runtimes that do not currently
support compaction, such as Go and Rust.

%% contents suppressed with 'anonymous'
\begin{acks}
  %% Commands \grantsponsor{<sponsorID>}{<name>}{<url>} and
  %% \grantnum[<url>]{<sponsorID>}{<number>} should be used to
  %% acknowledge financial support and will be used by metadata
  %% extraction tools.
  This material is based upon work supported by the
  \grantsponsor{GS100000001}{National Science
    Foundation}{http://dx.doi.org/10.13039/100000001} under Grant
  No.~\grantnum{GS100000001}{1637536}.  Any opinions, findings, and
  conclusions or recommendations expressed in this material are those
  of the author and do not necessarily reflect the views of the
  National Science Foundation.
\end{acks}

\bibliography{emery,mesh}

%% \clearpage
%% \input{./appendix.tex}

\end{document}